\DeclareMathOperator{\MRCI}{MRCI}
\DeclareMathOperator{\E}{\mathbb{E}}
\newtheorem{proposition}{Proposition}
\newtheorem{definition}{Definition}
\newtheorem{assumption}{Assumption}
\begin{document}

\title{How Many Mechanisms? Measuring Parsimony in Risky Choice}
\author{Avner Seror\footnote{Aix-Marseille University, CNRS, AMSE, Marseille, France. Email: avner.seror@univ-amu.fr. I am grateful to Marina Agranov, Benjamin Enke, En Hua Hu, Dotan Persitz, John Quah, and Thierry Verdier for helpful discussions. I acknowledge funding from the French government under the ANR
JCJC “BIAS" project (reference: ANR-25-CE26-7164-01) and the “France 2030” investment plan managed by the French National Research Agency (reference: ANR-17-EURE-0020) and from the Excellence Initiative of Aix-Marseille University – A*MIDEX. Mistakes are my own. Refine.ink was used to check the paper for consistency and clarity. All scientific content, modeling choices, and results were produced and verified by the author. Pre-registration exemption: this paper does not generate new data. The empirical section applies the framework to three publicly available datasets. }} 
\date{May, 2026}
\maketitle
\onehalfspacing
\thispagestyle{empty}

\begin{abstract}
Behavioral theories rest on parsimony: a small number of mechanisms organizing many decisions. We define a Maximum Rule Concentration Index that measures how parsimoniously a dataset of risky choices can be organized through a library of simple, parameter-free decision rules drawn from canonical behavioral theories: salience, regret, disappointment, modal-payoff focusing, extreme-outcome screening, and limited attention. Applied to three lottery-choice datasets, the data exhibit detectable parsimony: for a majority of subjects, observed concentration exceeds what standard utility models generate on the same menus. The concentration organizes around salience thinking, modal-payoff focusing, and regret.
\end{abstract}

\medskip
\noindent\textbf{Keywords:} Behavioral Economics, Risky Choice, Decision Rules, Parsimony. 
\medskip

\noindent\textbf{JEL Classification:} D81, D91, C12, C63.
\medskip

\setcounter{page}{1}

\section{Introduction}\label{section:intro}

The value of simple behavioral theory lies in parsimony. Cumulative prospect theory \citep{kahneman_tversky1979,tversky_kahneman1992} organizes the Allais paradox, the reflection effect, and a range of framing effects with a reference-dependent value function and a nonlinear weighting of probabilities. Salience theory \citep{bordalo2012_risk} compresses context-dependent preference reversals, decoy effects, and some violations of transitivity into a single contrast-based attention mechanism. These theories are influential precisely because a small number of primitives organize behavioral patterns that would otherwise require separate explanations.\footnote{The broader behavioral literature proposes many such mechanisms; see, among others,  \citet{KoszegiRabin2006} on reference dependence, \citet{LoomesSugden1986,bell85,Gul1991} on regret and disappointment, \citet{enke2024_attenuation,enke2023_uncertainty} on behavioral attenuation and cognitive uncertainty, and \citet{sims2003,gabaix2014,masatlioglu2012,manzini_mariotti2014} on limited attention and consideration.}

But is this compression real? When a broad set of candidate mechanisms is placed on common footing, how many are actually needed to organize a large risky-choice dataset? An informative answer requires three things. We need a way to measure compressibility that does not depend on estimating any one model in particular. We need a way to validate the resulting measure against alternative explanations - in particular, against the possibility that any apparent compression is what a standard stochastic-utility model would have produced anyway. And we need a way to ensure that the answer is not driven by the specific list of candidate mechanisms one happens to entertain. If behavior turns out to be genuinely compressible by these criteria, the enterprise of simple mechanism-based modeling in risky choice is empirically vindicated; if not, the apparent success of such theories may reflect convenient modeling vocabulary, or experimental designs constructed to elicit the very behaviors the theories predict, rather than low-dimensional behavioral structure in choice.

This paper develops the three pieces and applies them to several lottery-choice datasets. We define a Maximum Rule Concentration Index (MRCI) over a finite library of transparent, parameter-free decision rules drawn from canonical behavioral theories of risky choice: salience, regret, disappointment, modal-payoff focusing, extreme-outcome screening, and limited attention. Each rule can be stated in a single sentence and recommends one option whenever a simple feature of the menu makes that option preferable. The MRCI measures the largest concentration of rule usage attainable across menu-by-menu admissible assignments, where admissibility requires that each observed choice be locally rationalizable by the rule assigned to it. When the MRCI is close to one, a single rule can account for nearly all menus; when it is low, any admissible rationalization must spread weight across many rules. The MRCI thus operationalizes parsimony as a quantitative compressibility statistic conditional on a maintained set of candidate rules.

Two natural concerns about this setup organize the rest of the paper. The first is that a high observed MRCI need not, on its own, indicate a meaningful behavioral phenomenon. Under any reasonable stochastic-choice model, some apparent rule concentration arises mechanically, because the model's choice probabilities happen to align with the recommendations of one or another rule at one or another menu. To address this, we develop a benchmark test that treats the MRCI as a property of the data and asks whether conventional stochastic-utility models, fitted to the same data, reproduce that property. 

On the inferential side, the test has exact finite-sample size under the benchmark null with known parameter, conditional on the realized menu sequence. We define an excess-concentration alternative phrased directly in terms of the MRCI - requiring that the realized MRCI exceed, in the limit, a frontier determined by the null model's expected per-rule admissibility - and prove that, under this alternative, the exact benchmark $p$-value vanishes in probability and the test rejects with probability tending to one at every nominal level. 

The second concern is that the MRCI is by construction conditional on the maintained library: a critic could worry that any few-rule result is an artifact of the rules a researcher happened to include. To address this, we define a reduced-library frontier that selects the best small sublibrary of rules on a training sample of subjects and evaluates retention of compression out of sample on held-out subjects. Because rule selection uses only internal concentration criteria, the exercise is a genuine internal robustness check rather than a second specification search.

We apply the framework to three datasets: the CPC18 lottery-choice experiment of \citet{erev_ert_plonsky_2017} and \citet{plonsky2017_cpc18}, with $686$ subjects making between $400$ and $560$ binary lottery decisions each (the feedback portion of the experiment, see Section~\ref{sec:empirical}); the predecessor CPC15 dataset \citep{erev_ert_plonsky_2017}, with $279$ subjects on a different problem set; and the incentivized one-shot environment of \citet{baillon2020}---hereafter BBS---with $139$ subjects each making $70$ choices over richer multi-outcome lotteries.

The empirical pattern is the same in structure across all three datasets but differs sharply in strength. The average raw MRCI is high in each ($0.545$ in CPC18, $0.525$ in CPC15, $0.597$ in BBS), but most of this is mechanical: a pair of attention rules built into the library to ensure the framework is always feasible generates an average floor near $0.52$. The substantive object is the excess above this floor. CPC18 carries the strongest signal: mean excess of $0.028$ on a $[0,1]$ scale, $90$th percentile $0.085$, a $95\%$ subject-level bootstrap CI strictly above zero, and a majority of subjects rejecting each of the fitted stochastic-utility benchmarks. BBS comes next (mean excess $0.015$), with $14\%$ of subjects rejecting the cumulative prospect theory (CPT) benchmark at the $5\%$ level and the active-rule composition shifting toward reference-dependent mechanisms consistent with the experimental design. The signal is weakest on CPC15 (mean excess $0.006$, an order of magnitude smaller than CPC18); the framework remains portable but the substantive few-rule pattern is at the threshold of detectability.

A more demanding benchmark sharpens the picture. We also test observed compression against BEAST \citep{erev_ert_plonsky_2017,plonsky_etal2025_nhb}, the leading predictive model of risky choice in the CPC15/CPC18 problem space. Implemented from the official simulator without modification, BEAST absorbs part of the few-rule signal: CPT's $61.5\%$ rejection rate at the $5\%$ level falls to $52.9\%$ under BEAST. A near-majority of subjects still display compression exceeding even BEAST. 


The reduced-library frontier formalizes how few rules suffice. The two attention rules are retained throughout for feasibility; the question is which of the six remaining rules - worst-case focusing, best-case focusing, modal-payoff focusing, salience, regret, and disappointment - are needed to recover the compression. For each sublibrary size from one to six, we select, on a $50\%$ training sample of subjects, the sublibrary of that size that maximizes cross-subject mean MRCI; we then evaluate \emph{excess retention}, the share of the full-library excess MRCI that the sublibrary recovers above the mechanical floor, on the held-out half, and repeat over $100$ random splits to assess stability. The compression concentrates on a small core, and selection is highly stable. Salience alone, selected in $100\%$ of splits as the best one-rule sublibrary, preserves about $40\%$ of the excess; modal-payoff focusing plus salience, selected in $95\%$ of splits as the best pair, preserves $71\%$; adding regret as a third rule reaches $82\%$ ($80\%$ of splits). Four rules - salience, modal-payoff focusing, regret, and best-case focusing - reach $90\%$ excess retention; disappointment is never selected at three rules or fewer.

On BBS we also run a behavioral benchmark grounded in reference-dependent choice: the multi-reference specification of \citet{baillon2020}, which allows for six alternative reference points and identifies, per subject, the one best supported by the data. Only $1.4\%$ of BBS subjects reject this benchmark at the $5\%$ level, far below the rate that reject the generic stochastic-utility benchmarks on the same dataset---a reference-dependent model with subject-specific reference points absorbs much of what our rule library captures. On CPC18, by contrast, a near-majority of subjects still reject even BEAST; the contrast is what makes the CPC18 few-rule structure substantively informative.

The headline survives a battery of robustness checks. The MRCI is invariant to within-problem dynamics by construction: because all repetitions of a given problem face the same menu, only the count of left versus right choices on each problem enters the statistic, so learning and autocorrelation across repetitions may not drive the result. The standard benchmark holds each subject's realized sample of games fixed and randomizes only choices, so the rejection rate is conditional on that game sample; to assess sensitivity to game composition, a problem-level block bootstrap that also resamples games yields a $23\%$ CPT rejection rate at the $5\%$ level (against $62\%$ under iid CPT), bracketing a methodological range. The formal size guarantee assumes known benchmark parameters, but the application estimates them subject by subject; a Monte Carlo size calibration confirms that the resulting parametric bootstrap may be conservative under the CPT null, so rejection rates may not be inflated by parameter estimation. Random left-right relabeling collapses the MRCI to near the mechanical floor, confirming that the compression reflects the actual choice patterns rather than a structural feature of the menus.

The paper contributes to a broad question in economics: whether the empirical success of simple behavioral theories reflects genuine low-dimensional structure in behavior, or merely convenient modeling vocabularies for a higher-dimensional process. The literature on parsimony in behavior has approached this question along two lines. A first line proposes parsimonious models, often with microfoundations, that organize a range of choice phenomena within a single mechanism family \citep{kahneman_tversky1979,tversky_kahneman1992,bordalo2012_risk,KoszegiRabin2006}. A second line treats parsimony as an empirical question about the joint structure of behavior across choice environments  (\citep{chapman_etal2023_econographics, enke2024_attenuation}). The present paper develops a third line: parsimony is measured directly on choice data by quantifying how compressible a subject's choices are by a library of decision rules. Within the literature on procedural choice and decision rules - including bounded rationality and heuristic choice \citep{simon1955,rubinstein88,tversky_kahneman1974,Kahneman1982}, rule rationality \citep{aumann2008,aumann2019}, simplicity in risky choice \citep{enke2023,fudenberg2022,puri2025,oprea2024,hu2023}, the priority heuristic and lexicographic models \citep{brandstatter_et_al2006,gigerenzer_gaissmaier2011}, and procedural models of consideration and categorization \citep{salant2006,manzini2007,manzini2012,masatlioglu2012,caplin_dean2015,cherepano2013,arrieta2024,halevy2024} - we provide a quantitative measure of the compressibility of behavior into a small number of mechanism families together with a formal benchmark test.

The closest companion paper is \citet{seror2026randomrulemodel}, with which we share the rule library and the activation discipline. The two papers ask different questions and use different methods; their findings do not depend on each other. \citet{seror2026randomrulemodel} develops a Random Rule Model: an explicit data-generating process in which the rule applied at each choice is drawn from a menu-feature softmax gate over rule indices. The contribution is structural and predictive. The MRCI is a non-parametric measurement of the parsimony within a dataset; the Random Rule Model is a candidate structural account of how that parsimony might arise. 

The remainder of the paper is organized as follows. Section~\ref{sec:model} introduces the choice environment, the rule library, and the activation discipline. Section~\ref{sec:MRCI} defines the MRCI, the reduced-library frontier, and their key properties. Section~\ref{sec:inference} develops the benchmark test against stochastic-utility models. Section~\ref{sec:empirical} presents the empirical results for CPC18 - raw and excess MRCI, benchmark validation, library reduction, robustness followed by replication on CPC15 and cross-environment portability on BBS. Section~\ref{sec:conclusion} concludes. Appendices contain formal rule definitions, all proofs, and computational details.

\section{Framework}\label{sec:model}

To measure the compressibility of risky-choice behavior, we need three ingredients: a library of candidate mechanisms, a discipline for deciding which mechanisms can rationalize which choices, and an object that captures the strongest possible compression of the data into those mechanisms. This section introduces the first two ingredients; Section~\ref{sec:MRCI} defines the compression measure itself and Section~\ref{sec:inference} develops the benchmark against stochastic utility. We develop the framework for binary lottery choices, which is the setting of our empirical application. The extension to discrete choice over menus of arbitrary size is conceptually straightforward and mainly notational.

\paragraph{Choice environment.}
We consider a decision maker facing a sequence of binary choices under risk. The dataset is a collection of observations $D=\{(x_t,A_t)\}_{t\in\mathcal T}$ indexed by $\mathcal T:=\{1,\dots,T\}$. At each observation, the decision maker faces a menu $A_t=\{L_t^1,L_t^2\}$ consisting of two finite-support lotteries over real monetary outcomes, and selects one alternative $x_t\in A_t$. We write $L_t^-$ for the unchosen alternative.

\subsection{Decision rules}\label{subsec:rules}

We model behavior as the outcome of a small library of \emph{decision rules}, in the spirit of \citet{seror2026randomrulemodel}. Each rule is a deterministic mapping from a menu to a recommendation or to abstention - it either picks one of the two options or declines to rank them. Formally, for each rule $f$ and menu $A=\{L^1,L^2\}$, define an \emph{activity indicator}
\[
A_f(A)\ :=\ \begin{cases} 1 & \text{if } f \text{ delivers a strict recommendation at } A, \\ 0 & \text{otherwise (}f\text{ abstains)}, \end{cases}
\]
and, when $A_f(A)=1$, a \emph{recommendation indicator} $L_f(A)\in\{L^1,L^2\}$ identifying the option recommended by $f$. Rules are parameter-free: their recommendations depend only on the menu, not on any estimated quantity. The pair $(A_f, L_f)$ is the only object the framework requires from each rule, and the abstract setup does not specify how rules are constructed internally---any procedure that delivers a deterministic, parameter-free mapping from menus to $\{L^1, L^2, \text{abstain}\}$ qualifies.

In our application, we construct a concrete library by specifying, for each rule, a \emph{perceived-payoff transformation} that maps the original menu to a single representative payoff per option---the worst-case outcome for one rule, the modal outcome for another, the salience-weighted contrast for a third, and so on. The activity and recommendation indicators are then derived by a strict comparison: $f$ recommends the option with the strictly higher representative payoff and abstains in case of a tie. This two-step construction is an implementation device that gives a uniform machinery for translating diverse heuristics into the indicator form; it is not part of the abstract framework. Full formal definitions of the transformations for each rule are in Appendix~\ref{ec:rules}.

The maintained baseline library contains eight rules, each capturing a behavioral mechanism that selects an option on the basis of a particular menu feature:
\begin{itemize}[leftmargin=*,itemsep=0.15em]
\item Three outcome-simplification heuristics, each reducing a lottery to a single representative payoff: worst-case (\textsc{MMn}), best-case (\textsc{MMx}), and most-likely payoff (\textsc{MAP}). These rules formalize the focal-feature heuristics studied in the bounded-rationality and lexicographic-choice traditions \citep{simon1955,gigerenzer_gaissmaier2011,brandstatter_et_al2006}.
\item Three context-dependent comparison rules: salience (\textsc{SAL}, \citealp{bordalo2012_risk}), which uses the payoff comparison with the largest contrast; regret (\textsc{REG}, \citealp{LoomesSugden1986}), which minimizes the worst foregone gain; and disappointment (\textsc{DIS}, \citealp{bell85,Gul1991}), which avoids the option with the largest downside relative to its most likely outcome.
\item A pair of \emph{limited-attention rules}, \textsc{A1} and \textsc{A2}, capturing one-sided consideration of the menu. \textsc{A1} corresponds to a decision maker who attends only to the left option and picks it; \textsc{A2}, symmetrically, attends only to the right option and picks it. This is the limited-consideration channel familiar from the literature on attention and consideration sets \citep{manzini2007,masatlioglu2012,manzini_mariotti2014}: a decision maker who, faced with a binary menu, effectively considers only one side. \textsc{A1} and \textsc{A2} are substantive behavioral rules in their own right; they also account for a large mechanical share of the raw MRCI, with the rest of the analysis turning on the gap between raw MRCI and the floor that one-sided attention alone would generate.
\end{itemize}

Each rule in the library is a parameter-free implementation of a comparison primitive familiar from a richer behavioral theory: \textsc{SAL}, \textsc{REG}, and \textsc{DIS} embed the comparison logic of salience theory \citep{bordalo2012_risk}, regret theory \citep{LoomesSugden1986}, and disappointment theory \citep{bell85,Gul1991}; \textsc{MMn}, \textsc{MMx}, and \textsc{MAP} formalize the focal-feature heuristics studied in the bounded-rationality tradition \citep{simon1955,gigerenzer_gaissmaier2011,brandstatter_et_al2006}; and \textsc{A1}, \textsc{A2} embed the limited-consideration channel familiar from the attention literature \citep{manzini2007,masatlioglu2012}. The richer theories combine these comparisons with parametric valuation; our rules only retain simplified comparison primitives. This is a methodological choice: it ensures that each rule has clear, falsifiable behavioral content and that the library can be tested as a unit. A finding that this parsimonious version of the literature already organizes observed behavior into a small number of mechanism families is, if anything, stronger than the analogous claim drawn from richer parametric specifications, since it does not rely on the additional flexibility that the specifications may introduce.

The library is deliberately coarse and parameter-free in a further sense: each rule can be stated in a single sentence and contains no estimated curvature, weighting, or threshold parameter fit to the data. Probability-weighting rules (Prelec, rank-dependent) are notably absent because they cannot be expressed without an estimated curvature parameter; their inclusion would blur the distinction between transparent decision rules and parametric utility specifications. Maintaining that distinction is what gives the rule library its empirical content.

All objects in this paper are conditional on the maintained library. Because enlarging the library can only weakly increase attainable concentration (Proposition~\ref{prop:monotone-necessary}), a coarse baseline yields a conservative measure of parsimony. Robustness exercises with alternative family partitions and richer libraries are reported in Section~\ref{subsec:robustness}.

\subsection{Activity and admissible rule assignments}\label{subsec:activation}

Not every rule delivers a clear-cut recommendation at every menu, and even when a rule does deliver a recommendation, that recommendation need not match the realized choice. The framework restricts attention to rules that are decisive and consistent with the data: following \citet{seror2026randomrulemodel}, we say that rule $f$ is \emph{active} at observation $t$ if it delivers a strict recommendation at $A_t$ and that recommendation coincides with the realized choice $x_t$. In indicator form,
\[
f\in\mathcal F^{\mathrm{act}}_t \quad\Longleftrightarrow\quad A_f(A_t)\ =\ 1 \quad\text{and}\quad L_f(A_t)\ =\ x_t.
\]
In our concrete implementation, this criterion is easy to state. Each rule reduces the two lotteries to a single representative payoff---the worst-case outcome for \textsc{MMn}, the modal payoff for \textsc{MAP}, the salience-weighted contrast for \textsc{SAL}, and so on. Rule $f$ is then active at observation $t$ when the chosen lottery's representative payoff is strictly greater than the unchosen lottery's: the rule says ``pick the option with the strictly higher value of this feature,'' and the realized choice complies. Formal definitions of the perceived-payoff transformations for each rule are given in Appendix~\ref{ec:rules}.\footnote{The strict-recommendation criterion adopted here is the same as in \citet{seror2026randomrulemodel}, who reports that it yields the lowest out-of-sample prediction error relative to alternative activation thresholds.}

The active set $\mathcal F^{\mathrm{act}}_t$ collects the rules that, applied to menu $t$, would have produced the observed choice as their decisive recommendation.

Because \textsc{A1} and \textsc{A2} always recommend the left and right option respectively, exactly one of the two attention rules is always in $\mathcal F^{\mathrm{act}}_t$, so the active set is never empty.

\begin{definition}[Admissible rule assignment]\label{def:admissible}
An assignment $\mathbf y=\{y_{t,f}\}_{t\in\mathcal T,f\in\mathcal F}$ with $y_{t,f}\in\{0,1\}$ is admissible if, for each $t$, exactly one rule is used and that rule is active:
\[
\sum_{f\in\mathcal F} y_{t,f}\ =\ 1, \qquad y_{t,f}=1\ \Longrightarrow\ f\in\mathcal F^{\mathrm{act}}_t.
\]
Let $\mathcal C(D;\mathcal F)$ denote the set of admissible assignments.
\end{definition}

\paragraph{Beyond risky choice.} The framework is structurally domain-agnostic. The active set $\mathcal F^{\mathrm{act}}_t$ and Definition~\ref{def:admissible} depend only on (i) a library of deterministic, parameter-free menu-to-recommendation maps and (ii) the realized choice; nothing in the construction is specific to lotteries. The substantive content of the framework---what ``compressibility'' picks out in a given domain---is delivered by the rule library, which is necessarily domain-specific. Once a library is fixed, the MRCI of Section~\ref{sec:MRCI} and the benchmark test of Section~\ref{sec:inference} apply without modification, and what is empirically informative---the magnitude of the excess MRCI above the attention-rule mechanical floor---depends on the design of the choice environment and on the alignment between subjects' choices and the candidate library, just as in our risky-choice application.

\section{Measuring Compression}\label{sec:MRCI}

Given a library of candidate mechanisms and a discipline for assigning them to menus, the central question is: how compressible is the data? Can behavior be organized as if a single mechanism---or a small number of them---were repeatedly invoked across many menus? This section defines the object that measures this compression.

Using the admissible assignments defined in Section~\ref{subsec:activation}, we write $\mathcal C(D;\mathcal F)$ for the set of admissible assignments and $\mathcal C(D)$ as shorthand.

\subsection{The Maximum Rule Concentration Index}

Given an admissible assignment $\mathbf y\in\mathcal C(D)$, define the frequency with which rule $f$ is used as
\[
s_f(\mathbf y):=\frac{1}{T}\sum_{t \in \mathcal T} y_{t,f},
\qquad f\in\mathcal F,
\]
and the associated Herfindahl-Hirschman index (HHI) by
\[
\mathrm{HHI}(\mathbf y):=\sum_{f\in\mathcal F} s_f(\mathbf y)^2\ \in\ \Big[\tfrac{1}{\mid \mathcal F \mid },1\Big].
\]
$\mathrm{HHI}(\mathbf y)$ equals $1/\mid \mathcal F \mid$ when all rules are used equally often, and equals $1$ when a single rule accounts for all observations in $\mathcal T$. 

\begin{definition}[Maximum Rule Concentration Index]\label{def:MRCI}
The Maximum Rule Concentration Index is
\[
\MRCI(D;\mathcal F)
\ :=\ 
\max_{\mathbf y \in \mathcal C(D;\mathcal F)} \mathrm{HHI}(\mathbf y),
\]
and the effective number of rules is $N_{\mathrm{eff}}:=1/\MRCI(D;\mathcal F)$.
\end{definition}

The MRCI measures how parsimoniously observed behavior can be organized within the maintained library $\mathcal F$. An MRCI close to one means the data admit an admissible explanation in which nearly all menus are assigned to the same rule - a single mechanism unifies the dataset. A low MRCI means any admissible explanation must spread weight across many mechanisms. The reciprocal $N_{\mathrm{eff}}=1/\MRCI$ gives the effective number of active rules within $\mathcal F$. Both the MRCI and $N_{\mathrm{eff}}$ are conditional on the maintained library: they measure revealed parsimonious organization, not the true number of latent psychological mechanisms in any structural sense. Enlarging the library can only weakly increase the MRCI (Proposition~\ref{prop:monotone-necessary} below), so a coarse baseline library yields a conservative lower bound on attainable parsimony. Let
\[
\alpha\ :=\ \frac{1}{T}\sum_{t\in\mathcal T}\mathbf 1\{x_t=L_t^1\}
\]
denote the share of observations in which the decision maker chooses the first-listed lottery in the menu.

\paragraph{Basic properties.}
\begin{proposition}\label{prop:properties}
For any dataset $D$, and library $\mathcal F$:
\begin{enumerate}\itemsep0.25em
\item $\MRCI(D;\mathcal F)\le 1$, and if $\{\textsc{A1},\textsc{A2}\}\subseteq\mathcal F$ then
\[
\MRCI(D;\mathcal F)\in \big[\alpha^2+(1-\alpha)^2,\,1\big].
\]
\item $\MRCI(D;\mathcal F)=1$ if and only if there exists a rule $f\in\mathcal F$ such that
\[
f\in\mathcal F_t^{\mathrm{act}}\qquad\text{for all }t\in\mathcal T.
\]
\item Concatenating $k$ identical copies of $D$ (i.e., repeating each observation $k$ times) leaves $\MRCI(D;\mathcal F)$ unchanged.
\item If for some $t\in\mathcal T$ there is a unique active rule $f$ (i.e., $\mathcal F_t^{\mathrm{act}}=\{f\}$), then every maximizer $\mathbf y^\star$ of $\MRCI(D;\mathcal F)$ satisfies $y^\star_{t,f}=1$.
\end{enumerate}
\end{proposition}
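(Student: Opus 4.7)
The plan is to take the four claims in turn; (i), (ii), and (iv) are essentially direct, while (iii) requires a short convexity argument that I would flag as the main technical step.

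For (i), I would exhibit an explicit admissible assignment using only the two attention rules: pick $\textsc{A1}$ at every $t$ with $x_t = L_t^1$ and $\textsc{A2}$ at every $t$ with $x_t = L_t^2$. Strict discriminability is immediate because each rule collapses the unchosen lottery to the sure payoff $-M$, which by construction is strictly FSD-dominated by every lottery in the dataset. The resulting rule shares are $(\alpha, 1-\alpha)$ on $(\textsc{A1},\textsc{A2})$ and zero elsewhere, giving $\mathrm{HHI}=\alpha^2+(1-\alpha)^2$. The upper bound $\MRCI\le 1$ is immediate since rule shares form a probability distribution.

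For (ii), $\mathrm{HHI}$ of a probability vector equals $1$ iff the vector is concentrated at a single coordinate, so a maximizer with $\mathrm{HHI}=1$ uses a single rule $f$ at every observation; admissibility then forces $f\in\mathcal F_t^{\mathrm{strict}}$ for all $t$. Conversely, any such $f$ yields the constant assignment $y_{t,f}=1$ with $\mathrm{HHI}=1$. For (iv), admissibility requires $\sum_{f'\in\mathcal F_t^{\mathrm{strict}}}y_{t,f'}=1$, so $\mathcal F_t^{\mathrm{strict}}=\{f\}$ forces $y_{t,f}=1$ in every admissible (hence every optimal) assignment.

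For (iii), the plan is to exploit convexity of $\mathrm{HHI}$. Parameterize any admissible assignment by a probability vector $\lambda_{t,\cdot}$ supported on $\mathcal F_t^{\mathrm{strict}}$ for each observation $t$, with induced share $s_f=\tfrac{1}{T}\sum_t\lambda_{t,f}$. Since $\mathrm{HHI}(s)=\sum_f s_f^2$ is convex in $s$ and $s$ is linear in $\lambda$, $\mathrm{HHI}$ is convex in $\lambda$. On $D$, admissible assignments correspond to $\lambda_{t,\cdot}$ being a unit vector in $\Delta(\mathcal F_t^{\mathrm{strict}})$; on the $k$-fold concatenation $D^k$, they correspond to $\lambda_{t,\cdot}$ having coordinates in $\{0,1/k,\dots,1\}$. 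Both feasible sets lie inside the product of simplices $\prod_t\Delta(\mathcal F_t^{\mathrm{strict}})$, whose extreme points are exactly the unit-vector tuples, i.e., the admissible assignments on $D$. Since the maximum of a convex function on a polytope is attained at an extreme point, $\MRCI(D^k)\le\MRCI(D)$; the reverse inequality follows by replicating any optimal assignment on $D$ across all $k$ copies. The main obstacle is exactly this extreme-point step — identifying the right containing polytope and concluding that the within-observation mixing that $D^k$ enables cannot strictly raise concentration.
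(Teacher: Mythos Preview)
Your proposal is correct. For parts (i), (ii), and (iv) you follow essentially the same lines as the paper: exhibit the attention-rule assignment for the lower bound, use the trivial characterization of probability vectors with $\sum_f s_f^2=1$ for part (ii), and note that admissibility directly forces $y_{t,f}^\star=1$ whenever $\mathcal F_t^{\mathrm{strict}}=\{f\}$ for part (iv).

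For part (iii) you take a genuinely different route. The paper argues by blockwise replication: it takes any feasible $\mathbf y$ on $D$, replicates it across the $k$ blocks of $D^{(k)}$, and checks that the induced shares and hence $\mathrm{HHI}$ are unchanged; from this it concludes that ``the feasible objective set is preserved'' and therefore $\MRCI(D^{(k)})=\MRCI(D)$. That argument cleanly yields $\MRCI(D^{(k)})\ge\MRCI(D)$ but leaves the reverse inequality (that no \emph{non}-blockwise assignment on $D^{(k)}$ can strictly beat the best assignment on $D$) asserted rather than derived. Your convexity argument fills exactly this gap: by summarizing any assignment on $D^{(k)}$ via the within-observation averages $\lambda_{t,\cdot}\in\Delta(\mathcal F_t^{\mathrm{strict}})$, observing that $\mathrm{HHI}$ is convex in $\lambda$, and invoking the fact that the maximum of a convex function over a product of simplices is attained at an extreme point (i.e., a deterministic assignment on $D$), you obtain $\MRCI(D^{(k)})\le\MRCI(D)$ directly. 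So your approach is slightly more work but more complete on the nontrivial direction; the paper's approach is shorter but relies on an unstated step.
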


The upper bound in (i) corresponds to a one-rule explanation. The lower bound comes from the always-feasible attention-rule assignment: use \textsc{A1} whenever $x_t=L_t^1$ and \textsc{A2} otherwise, which yields $\mathrm{HHI}=\alpha^2+(1-\alpha)^2$. Property (iii) shows that MRCI is intensive (composition-based) rather than extensive (sample-size-based). Property (iv) highlights ``forced moves'' that anchor the optimization: if an observation can only be strictly discriminated by a single rule, every admissible maximizer must assign that observation to the same rule.

\subsubsection*{Monotonicity}

Let $\mathcal F'\supseteq\mathcal F$ be an expanded library.

\begin{proposition}[Monotonicity]\label{prop:monotone-necessary}
For any dataset $D$,
\[
\MRCI(D;\mathcal F')\ \ge\ \MRCI(D;\mathcal F).
\]
Moreover, a necessary condition for strict inequality is that there exists an MRCI-maximizing assignment $\mathbf y^\star$ under $\mathcal F$ and a new rule $f\in\mathcal F'\setminus\mathcal F$ such that, on a nonempty subset of observations, $f$ is active and can replace (while maintaining admissibility) two distinct rules $r\neq s$ that both have positive shares under $\mathbf y^\star$ (i.e., $s_r(\mathbf y^\star),s_s(\mathbf y^\star)>0$).
\end{proposition}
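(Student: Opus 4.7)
The weak inequality is by inclusion of admissible sets. Any $\mathbf y\in\mathcal C(D;\ge_{\mathrm{FSD}},\mathcal F)$, extended by $y_{t,f}=0$ on $f\in\mathcal F'\setminus\mathcal F$, remains in $\mathcal C(D;\ge_{\mathrm{FSD}},\mathcal F')$: the per-menu strict set $\mathcal F_t^{\mathrm{strict}}$ is defined rule by rule and does not shrink when new rules are added to the library. Hence maximizing HHI over a larger feasible set gives $\MRCI(D;\ge_{\mathrm{FSD}},\mathcal F')\ge\MRCI(D;\ge_{\mathrm{FSD}},\mathcal F)$.

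For the necessary condition I would argue by contraposition. Assume that for every MRCI-maximizer $\mathbf y^\star$ under $\mathcal F$ and every new rule $f\in\mathcal F'\setminus\mathcal F$, there is no nonempty $T^*\subseteq\mathcal T$ with $f$ strictly discriminating on $T^*$ and $\mathbf y^\star$ using two distinct rules on $T^*$. This is equivalent to the statement that on the entire set $T_f^{\mathrm{disc}}:=\{t\in\mathcal T:f\in\mathcal F_t^{\mathrm{strict}}\}$, $\mathbf y^\star$ uses a single rule, which I denote $r_f\in\mathcal F$. Fix one such $\mathbf y^\star$ and record $r_f$ for every $f\in\mathcal F'\setminus\mathcal F$ with $T_f^{\mathrm{disc}}\ne\emptyset$ (new rules with empty $T_f^{\mathrm{disc}}$ can never be used in any admissible assignment and may be ignored).

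Now pick an MRCI-maximizer $\mathbf y''$ under $\mathcal F'$ and construct an $\mathcal F$-admissible projection $\tilde{\mathbf y}$ by reassigning every observation $t$ at which $\mathbf y''$ invokes a new rule $f$ to the old rule $r_f$. Admissibility is the key point: since $\mathbf y''$ uses $f$ strictly at $t$, $t\in T_f^{\mathrm{disc}}$, and by hypothesis $r_f$ is strictly discriminating on all of $T_f^{\mathrm{disc}}$, so $r_f\in\mathcal F_t^{\mathrm{strict}}$. Group new rules by target: $F_g:=\{f\in\mathcal F'\setminus\mathcal F:r_f=g\}$ for $g\in\mathcal F$; the projected share satisfies $\tilde s_g=s''_g+\sum_{f\in F_g}s''_f$. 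The elementary inequality $(a+b_1+\cdots+b_k)^2\ge a^2+b_1^2+\cdots+b_k^2$ for nonnegative reals yields $\tilde s_g^2\ge(s''_g)^2+\sum_{f\in F_g}(s''_f)^2$; summing over $g\in\mathcal F$ and noting that the $F_g$ partition the set of new rules used by $\mathbf y''$, I get $\mathrm{HHI}(\tilde{\mathbf y})\ge\mathrm{HHI}(\mathbf y'')$. Since $\tilde{\mathbf y}\in\mathcal C(D;\ge_{\mathrm{FSD}},\mathcal F)$, $\mathrm{HHI}(\tilde{\mathbf y})\le\MRCI(D;\ge_{\mathrm{FSD}},\mathcal F)$, and chaining gives $\MRCI(D;\ge_{\mathrm{FSD}},\mathcal F)\ge\mathrm{HHI}(\mathbf y'')=\MRCI(D;\ge_{\mathrm{FSD}},\mathcal F')$. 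Combined with the weak direction this forces equality, contradicting strict inequality.

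The main delicate step is ensuring that the projection $\tilde{\mathbf y}$ is $\mathcal F$-admissible. This requires $r_f$ to be strictly discriminating at every observation where $\mathbf y''$ invokes $f$, not only at observations already assigned to $r_f$ by $\mathbf y^\star$. The contrapositive hypothesis delivers exactly this, because it forces $\mathbf y^\star$ to use the same rule $r_f$ across the \emph{entire} discriminability set $T_f^{\mathrm{disc}}$, which contains $\{t:y''_{t,f}=1\}$ by construction. A minor recurring subtlety is that the map $f\mapsto r_f$ need not be injective and $r_f$ may coincide with an old rule already loaded by $\mathbf y''$; both cases are absorbed into the single inequality above and only sharpen the conclusion.
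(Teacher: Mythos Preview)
Your proof is correct. The monotonicity direction is identical to the paper's. For the necessary condition, however, you take a different and somewhat cleaner route.

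The paper argues \emph{forward}: it fixes an $\mathcal F$-optimizer $\mathbf y^\star$, considers reassignments that move mass $\delta_r$ from old rules $r$ to a single new rule $f$, and computes the explicit change
\[
\Delta\mathrm{HHI}=\Big(\sum_r\delta_r\Big)^2-2\sum_r s_r^\star\delta_r+\sum_r\delta_r^2,
\]
then checks that a one-donor move gives $\Delta\mathrm{HHI}=2\delta_r(\delta_r-s_r^\star)\le 0$, so at least two donors with positive optimal shares are required. This argument is phrased for $\mathcal F'=\mathcal F\cup\{f\}$ and only for the restricted class of $\mathcal F'$-assignments obtained by perturbing $\mathbf y^\star$ toward $f$; the step from ``no improving perturbation of this form'' to ``no $\mathcal F'$-assignment beats $\mathrm{HHI}(\mathbf y^\star)$'' is left somewhat implicit.

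You argue \emph{backward}: under the contrapositive, each new rule $f$ has its entire discriminability set $T_f^{\mathrm{disc}}$ covered by a single old rule $r_f$ (hence $r_f$ is admissible everywhere $f$ is), and you project the $\mathcal F'$-optimizer $\mathbf y''$ to an $\mathcal F$-admissible $\tilde{\mathbf y}$ via $f\mapsto r_f$. The merging inequality $(a+b_1+\cdots+b_k)^2\ge a^2+\sum_i b_i^2$ then gives $\mathrm{HHI}(\tilde{\mathbf y})\ge\mathrm{HHI}(\mathbf y'')$ directly. This buys you two things: it compares the two maxima in one stroke (no restriction to perturbations from $\mathbf y^\star$), and it handles an arbitrary $\mathcal F'\supseteq\mathcal F$ with several new rules simultaneously, whereas the paper's computation is written for one added rule. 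Both arguments ultimately rest on the same convexity fact about squares; yours packages it more tightly.
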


Intuitively, a new rule increases concentration only if it can act as a ``unifier,'' substituting for multiple distinct positive-share rules on the observations where those rules were previously needed.

\label{para:per_problem}In the empirical applications below, subjects face a small number of distinct lottery problems repeated many times (twenty-five times each in CPC18). It is therefore important to be precise about what the MRCI extracts from such data. Because all repetitions of a given problem face the same menu, the active set $\mathcal F_t^{\mathrm{act}}$ is constant across those repetitions. The contribution of a problem to the MRCI optimization therefore depends only on the \emph{count} of $L^1$ versus $L^2$ choices the subject made on that problem, not on the order or trial-level dynamics of those choices. As a consequence, two subjects who chose $L^1$ on the same fraction of repetitions of every problem produce the same MRCI, regardless of any differences in within-problem learning or autocorrelation. The substantive content of the MRCI in repeated-menu data is the alignment between subjects' \emph{per-problem choice frequencies} and the rule prescriptions, aggregated across the small number of distinct problems. Section~\ref{subsec:robustness} verifies this invariance directly: simulating from each subject's empirical per-problem $L^1$-frequencies leaves the MRCI essentially unchanged. 

\paragraph{Computation.}
Computing $\MRCI(D;\mathcal F)$ requires maximizing a convex function (the HHI) over a combinatorial feasible set of binary assignments---a class of problems that is NP-hard in general.\footnote{The problem is a special case of maximizing a separable convex objective over a transportation polytope with binary constraints. } We solve it with a scalable greedy heuristic with random restarts, exploiting the fact that concentration-maximizing assignments load heavily on high-coverage rules. We also provide an exact Mixed Integer Quadratic Programming (MIQP) algorithm to compute the MRCI, although this algorithm does not scale easily. Benchmarking the heuristic against the exact MIQP on 3,430 subsampled instances yields a mean absolute error of our heuristic below $10^{-3}$ and more than $96\%$ of estimates within a $1\%$ tolerance. Details are in Appendices~\ref{sec:computational} and~\ref{appendix:miqp}.

\section{Validating Compression Against Stochastic Utility}\label{sec:inference}

The MRCI is a property of the realized choice data: it summarizes how parsimoniously the observed sequence of choices can be organized in the maintained library. The question this section addresses is structural: do conventional stochastic-utility models, fitted to the same data, reproduce that property? Some concentration arises mechanically under any stochastic-choice model, because the model's choice probabilities happen to align with certain rules at certain menus, and a model can match a subject's per-menu choice probabilities without matching the joint cross-trial pattern that the MRCI captures. We develop a benchmark test that compares observed MRCI to the MRCI distribution generated by a fitted utility model with i.i.d.\ logit noise. Rejection means the joint specification of utility plus i.i.d.\ noise produces less rule-aligned cross-trial structure than the data exhibit; it does not say which component of that joint specification is responsible. Non-rejection, conversely, says the fitted specification is rich enough to reproduce the observed compression. Read this way, the test calibrates how much of the data's parsimony each utility model accounts for.

The construction is intuitive. Fix a subject. We estimate a parametric stochastic-choice model on that subject's data, simulate $B$ synthetic datasets in which each binary choice is drawn independently from the fitted model on the subject's actual menu sequence, recompute the MRCI on each synthetic dataset, and compare the observed MRCI to the resulting Monte Carlo distribution. If the observed value lies above the upper tail of the simulated distribution, the data exhibit compression that the fitted model with i.i.d.\ noise does not generate - structure in the joint cross-trial pattern of rule-alignment that the model's marginal choice probabilities, sampled independently, fail to reproduce. We proceed in four steps: (i) the benchmark null and the Monte Carlo $p$-value, with a finite-sample size guarantee under known parameters; (ii) the special case of menu-independent randomness, which reduces to a permutation test; (iii) an excess-concentration alternative phrased directly in terms of the MRCI, against which we prove the test consistent, with a Random Rule Model used as interpretive language following \citet{seror2026randomrulemodel}; and (iv) three canonical benchmark specifications. All proofs are collected in Appendix~\ref{ec:proofs}.

\subsection{Benchmark null and the Monte Carlo test}\label{subsec:model_benchmark}

Fix a subject and, for each sample size $T$, let $A_T = \{A_{t,T}\}_{t=1}^T$ denote the deterministic sequence of binary menus the subject faced, with $A_{t,T} = \{L^1_{t,T}, L^2_{t,T}\}$. Write $d_{t,T} := \mathbf 1\{x_{t,T} = L^1_{t,T}\}$ for the left-choice indicator and $M(D_T) := \MRCI(D_T;\mathcal F)$ for the MRCI of the dataset.

The benchmark null asserts that, conditional on $A_T$, the indicators $(d_{t,T})_{t=1}^T$ are independent Bernoulli random variables with menu-specific probabilities $q_{t,T}(\theta_0) \in [0,1]$ generated by some parametric stochastic-utility model with parameter $\theta_0\in\Theta$.

\begin{assumption}[Utility benchmark null ($H_0$)]\label{ass:u0}
There exists a parameter $\theta_0\in\Theta$ such that, conditional on the deterministic menu sequence $A_T$, the random variables $(d_{t,T})_{t=1}^T$ are independent and satisfy
\[
\Pr_{\theta_0}(d_{t,T}=1\mid A_T)=q_{t,T}(\theta_0)\in[0,1],\qquad t=1,\dots,T.
\]
\end{assumption}

A draw from this null is a synthetic dataset $D_T^* = \{(x_{t,T}^*, A_{t,T})\}_{t=1}^T$ on the same menu sequence in which $(d_{t,T}^*)_{t=1}^T$ are independent Bernoulli random variables with success probabilities $q_{t,T}(\theta_0)$. Given $B\geq 1$ independent draws $D_T^{*(1)},\dots,D_T^{*(B)}$, we use the standard right-tail Monte Carlo $p$-value:
\begin{equation}\label{eq:pu_mc}
\hat p^U_{T,B}(D_T;\theta_0)
:=
\frac{1+\sum_{b=1}^B \mathbf 1\{M(D_T^{*(b)})\geq M(D_T)\}}{B+1}.
\end{equation}

When the benchmark parameter $\theta_0$ is known (the oracle case), this Monte Carlo test has exact finite-sample size:

\begin{proposition}[Finite-sample validity under known benchmark parameter]\label{prop:utility_exact}
Under Assumption~\ref{ass:u0}, for every $T\geq 1$, every $B\geq 1$, and every $\eta\in[0,1]$,
\[
\Pr_{\theta_0}\!\left(\hat p^U_{T,B}(D_T;\theta_0)\leq \eta \,\middle|\, A_T\right)\leq \eta.
\]
Hence the test that rejects when $\hat p^U_{T,B}(D_T;\theta_0)\leq \eta$ has exact finite-sample size at most $\eta$ conditional on the realized menu sequence.
\end{proposition}

The proof is given in Appendix~\ref{ec:proofs} and uses the exchangeability of the observed and simulated statistics conditional on $A_T$. Proposition~\ref{prop:utility_exact} is an oracle result: it assumes the benchmark parameter $\theta_0$ is known. In the empirical application, $\theta_0$ is estimated subject-by-subject by maximum likelihood and the benchmark is simulated from $q_{t,T}(\hat\theta)$, yielding a parametric bootstrap. Exact finite-sample validity is then lost, but the test remains a well-calibrated benchmark for evaluating observed concentration; we assess the size of this approximation directly in the empirical analysis (Section~\ref{subsec:robustness}).

\subsection{The permutation test as a special case}\label{subsec:permutation_special}

If $q_{t,T}(\theta_0) \equiv \alpha$ for all $t$---i.e., the benchmark imposes the same left-choice probability at every menu---the indicators are i.i.d.\ $\mathrm{Bernoulli}(\alpha)$, and the conditional distribution of $(d_{t,T})$ given $Z = \sum_t d_{t,T}$ is uniform over all binary vectors with exactly $Z$ ones. The Monte Carlo test \eqref{eq:pu_mc} then reduces to a permutation test: compute MRCI for $B$ random reassignments of the $Z$ left-choices across $T$ positions, holding the menu sequence fixed, and compare the observed MRCI to this conditional randomization distribution. This nonparametric special case requires no model estimation and serves as a minimal screen for non-random structure, in the spirit of permutation-based revealed-preference inference \citep[see, e.g.,][]{cherchye2025}. About one-third of CPC18 subjects fail to reject even this weak benchmark, suggesting that their observed concentration is mechanically attainable under menu-independent randomness; the remainder display structure that exceeds what i.i.d.\ left-choice randomness would produce.

\subsection{Excess-concentration alternative and consistency}\label{subsec:rrm_consistency}

To formulate the alternative against which we prove the test consistent, we phrase a hypothesis directly in terms of the MRCI. Activation discipline enters the construction of the admissible set $\mathcal C(D;\mathcal F)$ (Section~\ref{subsec:activation}) and, through it, of the MRCI itself; it does not enter the alternative as a constraint on the data-generating process. We use a simple Random Rule Model as a verbal device for interpreting what ``excess concentration'' means, but the formal alternative does not rely on it.

\begin{definition}[Random Rule Model]\label{def:RRM}
Fix a library $\mathcal F$. A Random Rule Model (RRM) is a probability vector $w=(w_f)_{f\in\mathcal F}\in\Delta^{|\mathcal F|-1}$ with associated coincidence probability
\[
\tau(w)\;:=\;\sum_{f\in\mathcal F}w_f^2,
\]
the probability that two independent draws from $w$ coincide.
\end{definition}

Higher $\tau(w)$ corresponds to a more concentrated rule-share vector; an MRCI close to $\tau(w)$ for some $w$ with high $\tau(w)$ is what we mean informally by ``few-rule'' organization. The key inferential question is whether the observed MRCI exceeds a benchmark level $\kappa$ that the null model can mechanically generate. We formalize the alternative as a direct lower bound on the observed MRCI.

\begin{assumption}[Excess-concentration alternative ($H_1$)]\label{ass:H1}
Let $P_T$ denote the true law on $D_T$ conditional on $A_T$. There exists $\bar\tau>\kappa$ such that
\[
P_T\!\left(M(D_T)\geq \bar\tau\right)\to 1
\qquad\text{as }T\to\infty.
\]
\end{assumption}

The benchmark frontier $\kappa$ depends on the null model. For the model-based benchmark, let
\[
c_{f,T}(\theta_0)\ :=\ \E_{\theta_0}\!\Big[T^{-1}\sum_{t=1}^T \mathbf 1\{f\in\mathcal F_t^{\mathrm{act}}(D_T^*)\}\,\Big|\, A_T\Big]
\]
denote the expected share of menus at which rule $f$ is active under the benchmark, and define the \emph{utility benchmark frontier}
\begin{equation}\label{eq:kappa_u}
\kappa_U(\theta_0)\ :=\ K(c(\theta_0)),
\qquad
K(c)\ :=\ \max\Big\{\sum_f s_f^2 : s_f\geq 0,\ \sum_f s_f=1,\ s_f\leq c_f\Big\}.
\end{equation}
The function $K(\cdot)$ is the maximal HHI achievable when each rule's share is capped at $c_f$; its maximizer is given by water-filling on the sorted shares. For the permutation special case (Section~\ref{subsec:permutation_special}), the same construction with menu-independent randomness yields $\kappa = \tau_0(\alpha):=\alpha^2+(1-\alpha)^2$, the concentration obtained by mixing only \textsc{A1} and \textsc{A2} with shares $(\alpha,1-\alpha)$.

\begin{proposition}[Consistency against excess concentration]\label{prop:utility_power}
Fix the deterministic menu sequence $A_T$. Let $Q_{T,\theta_0}$ denote the benchmark law on $D_T$ satisfying Assumption~\ref{ass:u0} with parameter $\theta_0$, and suppose the benchmark expected admissibility shares $c_{f,T}(\theta_0)$ converge to limits $c_f(\theta_0)$ for every $f\in\mathcal F$. Define $\kappa_U(\theta_0):=K(c(\theta_0))$.
Let $P_T$ denote the true data-generating law for $D_T$ and suppose there exists $\bar\tau>\kappa_U(\theta_0)$ such that
\[
P_T\!\left(M(D_T)\geq \bar\tau\right)\to 1\qquad\text{as }T\to\infty.
\]
Then the exact benchmark $p$-value
\[
p_T^U(D_T;\theta_0)\ :=\ Q_{T,\theta_0}\!\left(M(D_T^*)\geq M(D_T)\,\big|\,A_T\right)
\]
satisfies $p_T^U(D_T;\theta_0)\xrightarrow{p}0$ under $P_T$, and for every fixed $\eta\in(0,1)$, $P_T(p_T^U(D_T;\theta_0)\leq \eta)\to 1$. The same conclusion holds for the Monte Carlo $p$-value whenever $B_T\to\infty$.
\end{proposition}

The proof, given in Appendix~\ref{ec:proofs}, proceeds in two steps: (i) under $P_T$, $M(D_T)\geq \bar\tau$ in probability; (ii) under $Q_{T,\theta_0}$, benchmark draws satisfy $M(D_T^*)\leq \kappa_U(\theta_0)$ in probability, because expected admissibility frequencies cap each rule's share. When $\bar\tau>\kappa_U(\theta_0)$, the two bounds separate and the $p$-value vanishes. The construction puts no requirement on how the alternative arises---rule-based or otherwise---only that the realized MRCI cross the benchmark frontier $\kappa_U(\theta_0)$ in the limit. An RRM with weights $\pi$ such that the data admit, in the limit, an admissible assignment of average HHI $\tau(\pi)>\kappa_U(\theta_0)$ is one such alternative; nothing in the proof requires the latent labels themselves to be admissible, and the excess-concentration condition on $P_T$ is the only restriction imposed on the data process.

\subsection{Three benchmark specifications}\label{subsec:specifications}

We implement the test with three canonical stochastic-choice models, ordered from simplest to most flexible. Each evaluates lotteries through a parametric functional and chooses stochastically via a logistic error rule, yielding menu-dependent left-choice probabilities $q_t(\theta)$ that plug directly into the Monte Carlo construction of Section~\ref{subsec:model_benchmark}.

\paragraph{Expected Value (EV).}
The risk-neutral benchmark. Let $\mathrm{EV}(L):=\sum_k z_k\pi_k$ denote the expected value
of lottery $L$. The left-choice probability is
\[
q_t^{\mathrm{EV}}(\lambda)
\;=\;
\Lambda\!\left(\lambda\left[\mathrm{EV}(L_t^1)-\mathrm{EV}(L_t^2)\right]\right),
\]
where $\Lambda(x)=(1+e^{-x})^{-1}$ is the logistic function and $\lambda>0$ is a precision
(inverse-noise) parameter. The single free parameter $\lambda$ is estimated by maximum likelihood
for each subject.

\paragraph{Expected Utility (EU-CRRA).} Let $u(x;r)=\operatorname{sign}(x)\,\lvert x\rvert^{r}$ denote a sign-symmetric power utility function with curvature $r>0$, and define $\mathrm{EU}(L;r):=\sum_k u(z_k;r)\,\pi_k$.\footnote{On gain-only payoffs $u$ reduces to $x^r$ and is monotone-equivalent (up to an affine rescaling of the noise parameter $\lambda$) to the conventional CRRA form $x^r/r$, so the implied left-choice probabilities are identical. CPC18 contains both gains and losses, and the sign-symmetric extension is used to keep a single free curvature parameter; no separate loss-aversion coefficient is estimated for the EU benchmark.} The left-choice probability is
\[
q_t^{\mathrm{EU}}(r,\lambda)
\;=\;
\Lambda\!\left(\lambda\left[\mathrm{EU}(L_t^1;r)-\mathrm{EU}(L_t^2;r)\right]\right).
\]
The parameter vector $\theta=(r,\lambda)$ is estimated by maximum likelihood for each subject.

\paragraph{Cumulative Prospect Theory (CPT).}
CPT \citep{tversky_kahneman1992} evaluates lotteries using a value function applied to outcomes and a nonlinear weighting of cumulative probabilities. Let $v(x;\alpha)=\operatorname{sign}(x)\,\lvert x\rvert^{\alpha}$ denote the sign-symmetric value function (with curvature $\alpha\in(0,1]$), and let $w(p;\gamma)=\exp(-(-\ln p)^{\gamma})$ denote the Prelec probability weighting function \citep{prelec1998}.\footnote{We adopt the sign-symmetric power form without a separate loss-aversion coefficient. The CPT benchmark is intended as a flexible utility specification that adds probability weighting on top of EU's curvature, and the comparison the benchmark makes does not depend on a separate gain--loss asymmetry; introducing a fourth parameter would only widen the box for benchmark-side fits.} For a lottery $L$ with ordered outcomes $z_1 < \cdots < z_m$ and probabilities $\pi_1,\dots,\pi_m$, the CPT valuation is
\[
\mathrm{CPT}(L;\alpha,\gamma) \;=\; \sum_{k=1}^m v(z_k;\alpha)\left[w\!\left(\sum_{j=k}^m \pi_j;\gamma\right) - w\!\left(\sum_{j=k+1}^m \pi_j;\gamma\right)\right],
\]
where the decision weights are obtained by applying $w$ to cumulative survival probabilities. The left-choice probability takes the Fechner form
\[
q_t^{\mathrm{CPT}}(\alpha,\gamma,\lambda)
\;=\;
\Lambda\!\left(\lambda\left[\mathrm{CPT}(L_t^1;\alpha,\gamma)-\mathrm{CPT}(L_t^2;\alpha,\gamma)\right]\right).
\]
The parameter vector $\theta=(\alpha,\gamma,\lambda)$ is estimated by maximum likelihood for each subject.\footnote{For numerical stability the CPT optimizer reparameterizes $\alpha\in(0,1)$ via a logistic transform and $\gamma\in(0.3,1.5)$ via an affine logistic transform; $\lambda>0$ is unbounded. The $\gamma$ box brackets the empirical mass of published probability-weighting estimates (typically $\gamma\in[0.5,1.0]$, see \citealp{prelec1998,tversky_kahneman1992}) and avoids the two regimes in which the Prelec function is ill-conditioned: as $\gamma\to 0$ the function collapses to a near-constant on $(0,1)$ and the likelihood gradient in $\gamma$ vanishes, and for $\gamma\gtrsim 1.5$ the weighting becomes so extreme S-shaped that small probabilities receive vanishing weight and the likelihood underflows. The boundary diagnostics in Section~\ref{subsec:robustness} use interior cutoffs $\alpha\in(0.05,0.95)$ and $\gamma\in(0.32,1.48)$. For $\gamma$ the cutoffs sit exactly $0.02$ inside the optimization-box edges $(0.3,1.5)$; for $\alpha$ the box is the open interval $(0,1)$ (asymptotic only) and the $(0.05,0.95)$ cutoffs flag subjects whose logistic-reparameterized MLE has saturated at either end. The diagnostic identifies subjects whose unconstrained likelihood ridge runs into the boundary rather than into an interior optimum. The headline rejection rate among interior subjects is essentially unchanged from the full-sample rate, so the box width and the buffer choice are not load-bearing.}

\medskip
For each benchmark model $m\in\{\mathrm{EV},\mathrm{EU},\mathrm{CPT}\}$ and each subject, we
estimate $\hat\theta_m$ by maximum likelihood using the subject's choice data, simulate $B$
synthetic datasets from $q_{t,T}(\hat\theta_m)$ on the same menu sequence, and compute the
Monte Carlo $p$-value as in~\eqref{eq:pu_mc}. The three benchmarks form a hierarchy of
increasing flexibility: EV imposes risk neutrality; EU-CRRA allows for risk aversion but
preserves rational expected-utility structure; CPT adds probability distortion, the dominant
behavioral channel in the literature. Rejection of all three benchmarks provides strong evidence
that the observed rule concentration reflects genuine ``few-rule'' structure that cannot be
attributed to any of these standard models with stochastic noise.

\section{Application to Risky Choice}\label{sec:empirical}

\paragraph{Data.}
We apply our framework to the publicly available CPC18 competition dataset \citep{erev_ert_plonsky_2017,plonsky2017_cpc18}.\footnote{The data can be downloaded at \url{https://cpc-18.com/data/}.} Since our rule library and dominance benchmark are defined over lotteries with known probabilities, we restrict attention to the risk problems and exclude ambiguity tasks. The CPC18 design separates an initial no-feedback block from four subsequent blocks with outcome feedback; the published prediction competitions treat the no-feedback block separately, and we follow that convention by restricting the analysis to the four feedback blocks (20 trials per problem). The resulting sample comprises $N=686$ subjects (average age 24.9, s.d.\ 2.93; 54\% female) recruited across multiple locations in Israel, each facing $20$ to $28$ distinct binary lottery problems for an average of $517$ analyzed decisions (range $400$--$560$). The choice problems were designed to elicit canonical behavioral anomalies in risky choice, including Allais paradoxes \citep{allais1953,tversky_kahneman1986}, certainty-like effects \citep{kahneman_tversky1979}, break-even effects \citep{thaler_johnson1990}, splitting effects \citep{birnbaum2008}, or systematic departures from expected-value reasoning driven by the weighting of rare events \citep{barron_erev2003}. For a detailed description of the experimental design, the exact list of problems, and the mapping from problems to targeted phenomena, see \citet{erev_ert_plonsky_2017}. Summary statistics are provided in Table \ref{tab:summary_stats}.

\paragraph{Implementation.}
For each subject, we compute the MRCI using the greedy heuristic with $100$ random restarts (see Appendix~\ref{sec:computational}). For the model-based benchmark tests, we estimate each specification (EV, EU-CRRA, CPT) by maximum likelihood per subject, simulate $B=500$ synthetic datasets from the fitted model on the same menu sequence, and compute the Monte Carlo $p$-value as in~\eqref{eq:pu_mc}. For the permutation test, we use $B=500$ random permutations. All computations use parallel execution on a 64-core cluster.

Before reporting results, we note that all eight baseline rules have nontrivial activity \emph{coverage} in CPC18 (Figure~\ref{fig:rule_coverage}): each exceeds roughly $15\%$ of menus on average. Low importance for a rule therefore reflects substitutability, not mechanical inadmissibility.

\subsection{How compressible is behavior?}

We organize the empirical analysis around two quantities. The \emph{raw} MRCI summarizes the maximum concentration achievable under the maintained library, including the attention rules \textsc{A1} and \textsc{A2}. The \emph{excess} MRCI subtracts off the subject-specific mechanical floor generated by those attention rules, $\alpha^2 + (1-\alpha)^2$, and is the substantive object of interest: it measures how much additional concentration the data admit beyond what attention rules would mechanically produce.

The average raw MRCI across subjects is $0.545$ (s.d.\ $0.050$), corresponding to an effective number of rules $N_{\mathrm{eff}} \approx 1.85$ (Figure~\ref{fig:neff}). The subject-specific mechanical floor averages $0.517$, so the mean excess MRCI is $0.028$. A nonparametric subject-level bootstrap---resampling the $686$ subjects with replacement and recomputing the cross-subject mean on each of $B=2000$ resamples---gives a $95\%$ confidence interval of $[0.025, 0.031]$ for the population mean excess MRCI (Table~\ref{tab:bootstrap_excess}), strictly above zero. The excess is modest on average but heterogeneous: $72\%$ of subjects have positive excess concentration, $24\%$ have excess above $0.05$, and the 90th percentile reaches $0.085$ (Figure~\ref{fig:excess}). At the same time, no subject in the sample admits a single-rule rationalization: the share of subjects with $N_{\mathrm{eff}}\approx 1$ is $0.000$ (Table~\ref{tab:summary_stats}). Even at its most concentrated, the data require at least two distinct rules to organize hundreds of decisions per subject.

The substantive question is not whether the excess is large in absolute terms but whether it is larger than what standard stochastic-utility models would mechanically generate. The next subsection takes that comparison directly.

\subsection{Compression exceeds what stochastic utility would generate}

If risky-choice data were only compressible under generic noise around standard utility, the case for simple mechanism-based theory would be much weaker. We therefore compare the observed concentration to what conventional stochastic utility models would mechanically generate on the same menus, using the benchmark framework of Section~\ref{subsec:model_benchmark} with the three specifications of Section~\ref{subsec:specifications}. The benchmarks are implemented as parametric bootstraps: for each subject, we estimate the model by maximum likelihood and simulate synthetic datasets from the fitted model. This is calibrated benchmark evidence, not an exact test (exact size is guaranteed only under known parameters; see Proposition~\ref{prop:utility_exact}).

Table~\ref{tab:benchmark_results} and Figure~\ref{fig:test} report the results. Even against cumulative prospect theory with Fechner noise, $52.9\%$ of subjects display concentration exceeding the benchmark at the $1\%$ level. Against expected utility, $57.0\%$ exceed the benchmark; against risk-neutral expected value, $65.3\%$. The permutation test (Section~\ref{subsec:permutation_special}) yields rejection rates of $63\%$ at the $1\%$ level. For a majority of subjects, the observed excess concentration---modest as it is in absolute terms---is larger than what these utility models with noise would produce on the same menus.

Two caveats. First, the CPT benchmark is imprecisely estimated for a nontrivial share of subjects: approximately $38\%$ have estimated value-function curvature at or near the boundary of the parameter space (specifically, $\hat\alpha\to 1$, the linear-utility limit). To verify that this does not drive the headline rejection rate, we condition on subjects with interior CPT estimates ($\hat\alpha\in(0.05, 0.95)$ and $\hat\gamma\in(0.32,1.48)$) and report rejection rates separately (Table~\ref{tab:cpt_boundary}). Among the $55.1\%$ of subjects with both parameters interior, the CPT rejection rate at the $1\%$ level is $50.5\%$---essentially unchanged from the headline---and among boundary subjects it is $55.8\%$. The boundary issue does not inflate the rejection rate. The EV and EU benchmarks, which are better-identified, also show majority rejection. Second, the benchmark assumes that choices are independent across menus conditional on the estimated parameters. In CPC18, subjects face repeated trials on the same lottery problems with feedback, which may induce serial dependence or learning. We address this directly in Section~\ref{subsec:robustness} below with a problem-level block bootstrap.

\subsection{Compression under library reduction}\label{subsec:frontier_empirical}

If the compression result depended on the specific composition of the eight-rule library, it would be much less compelling as evidence for the compressibility of behavior. We therefore ask how many of the eight rules are actually needed to preserve the concentration result, and which rules they are. Throughout this subsection, the attention rules \textsc{A1} and \textsc{A2} are always retained (they ensure feasibility), and a sublibrary is determined by which of the remaining six rules---\textsc{MMn}, \textsc{MMx}, \textsc{MAP}, \textsc{SAL}, \textsc{REG}, \textsc{DIS}---it includes. For each subject $i$ and each subset $S\subseteq\{\textsc{MMn},\textsc{MMx},\textsc{MAP},\textsc{SAL},\textsc{REG},\textsc{DIS}\}$, let $M_i(S)$ denote the MRCI computed on the library $S\cup\{\textsc{A1},\textsc{A2}\}$. Let $\bar M(S)$ denote the cross-subject average. The mechanical floor is the special case in which $S$ is empty, denoted $\bar M(\varnothing)$. The \emph{excess retention} of $S$ is
\[
R(S)\ =\ \frac{\bar M(S) - \bar M(\varnothing)}{\bar M(\mathcal F) - \bar M(\varnothing)},
\]
the fraction of the concentration above the mechanical floor that $S$ preserves. For each $K=0,1,\dots,6$, we select the $K$-rule subset $S^\star_K$ that maximizes $\bar M(\cdot)$ on a training sample (50\% of subjects) and evaluate $R(S^\star_K)$ out of sample on the held-out subjects, repeating over $100$ random splits to assess stability. Selection uses only internal concentration criteria---it does not optimize benchmark rejections---so the exercise is a genuine internal robustness check rather than a second specification search.

\paragraph{Set-based selection versus sequential addition.} For each $K$ the procedure selects the best $K$-rule subset by argmax over all $\binom{6}{K}$ subsets; order of inclusion plays no role in selection, and the $K$-rule subset is not constrained to extend the $(K-1)$-rule subset. Where the narrative below reads sequentially (``salience alone preserves about $40\%$; adding modal-payoff focusing reaches $71\%$''), it describes differences across set sizes, not a nested addition procedure. The distinction matters: in linear regression analysis, attribution of a coefficient change to individual covariates is sensitive to the order in which they are introduced when the procedure is sequential, even though the final fit is order-invariant, a point developed at length by \citet{gelbach2016}. Our set-based selection sidesteps this concern at the level of the headline retention numbers --- each row of Table~\ref{tab:rule_frontier} reports the best attainable $K$-rule retention, independent of which $(K-1)$-rule subset is nested inside it. Where the best-$K$ subsets do happen to nest --- salience at $K=1$, salience plus modal-payoff focusing at $K=2$, those two plus regret at $K=3$ --- the nesting is itself a property of the data rather than a procedural artifact: a small recurring core wins the cross-subject competition across $K$.

\medskip
Table~\ref{tab:rule_frontier} reports the results. Salience (\textsc{SAL}) is the single best rule at $K=1$ (selected in $100\%$ of splits), capturing about $40\%$ of the excess concentration. At $K=2$, the best pair is \textsc{MAP}~+~\textsc{SAL} ($95\%$ of splits), preserving about $71\%$ of the excess. At $K=3$, regret (\textsc{REG}) is added in $80\%$ of splits, raising retention to $82\%$. Reaching $90\%$ excess retention requires four rules: \textsc{MAP}, \textsc{SAL}, \textsc{REG}, and best-case extremum (\textsc{MMx}). Disappointment is never selected at $K\le 3$ and contributes the residual gap to full retention.

A small subset of rules wins the cross-subject MRCI competition. Salience and modal-payoff focusing rank as the top two rules in essentially every random split; regret comes third. We emphasize that this is a statement about which rules in our library best concentrate the cross-subject MRCI, not a claim that subjects in fact use those rules: the MRCI optimum is a rationalizability statistic, and the cross-subject competition selects rules that align well with observed per-problem choice frequencies. The qualifier matters in another sense too: the headline ``a small core captures most of the excess'' is $90\%$ of the average excess of about $0.028$---on the order of $0.025$ in absolute MRCI units. Large in proportional terms; small in absolute terms.

The reduced-library frontier above selects on a training sample of \emph{subjects} and evaluates retention on held-out subjects. As a stronger generalization check, we also hold out \emph{problems}: for each of $50$ random splits of the CPC18 problems into training and held-out halves, we select the $K$-rule subset that maximizes cross-subject MRCI on training-problem trials and evaluate its excess retention on held-out-problem trials. Generalization is somewhat noisier at low $K$, but the recurring core is consistent with the subject-split frontier. At $K=1$, \textsc{SAL} is the most consistently selected single rule ($60\%$ of splits, with \textsc{MMn} and \textsc{MAP} each appearing in a minority); from $K=2$ on, \textsc{MAP} and \textsc{SAL} are the recurring core (\textsc{MAP} in $94$--$96\%$ and \textsc{SAL} in $70$--$74\%$ of splits at $K=2$--$3$), with \textsc{REG} and \textsc{MMn} appearing as the next-most-selected rules. From $K=4$ the selected set reliably adds \textsc{MMn} and \textsc{MMx}; \textsc{DIS} is never selected at $K\le 5$. Four rules reach approximately $90\%$ retention on held-out problems. Detail is in Appendix~\ref{app:additional}, Table~\ref{tab:heldout_menu}.

\subsection{Robustness}\label{subsec:robustness}

The baseline analysis raises several natural concerns. We address them in two passes. First, two diagnostics whose results are central to interpreting the headline finding - heterogeneity across the A1/A2 dimension, and dependence-robustness given the repeated-trial structure with feedback - are reported here in full. Second, a battery of secondary diagnostics - bootstrap size, alternative activation thresholds, left-right relabeling, CPT boundary conditioning, the marginal information of parametric over permutation benchmarks - are summarized in one paragraph each, with detailed tables in Appendix~\ref{app:additional}. The headline findings survive all of these checks.

\paragraph{A1/A2 defaults, rule overlap, and the mechanical floor.}
A natural concern is that the raw MRCI is largely a mechanical artifact of the attention rules: every binary menu admits exactly one of A1, A2, so the always-feasible attention-only assignment achieves $\mathrm{HHI}=\alpha^2+(1-\alpha)^2$. The substantive object is therefore the \emph{excess} MRCI above this floor, which is solver-invariant (both MRCI and the floor are scalars defined without reference to a particular optimizer). The cross-subject mean excess is $0.028$ on a $[0,1]$ scale, with the 90th percentile at $0.085$; the broader concern that ``high raw MRCI is unsurprising given coarse-rule overlap'' is precisely why the benchmark test of Section~\ref{subsec:model_benchmark} is essential. The benchmark models also generate choices with nontrivial rule overlap, and the fact that observed MRCI still exceeds the benchmark for a majority of subjects means the data are more compressible than what utility models would mechanically produce. On average, about $4$ rules out of the eight in the library are admissible per menu, so coverage is broad enough that no single rule is forced on the optimum by mechanical inadmissibility.

The composition is heterogeneous, however, and the substantive few-rule signal is concentrated in the subjects whose data admit the most non-attention compression. The natural way to bin subjects on this dimension is by \emph{excess MRCI} (observed MRCI minus the subject-specific mechanical floor $\alpha^2+(1-\alpha)^2$): both quantities are scalars defined without reference to a specific MRCI-maximizing assignment, so the binning is solver-invariant. Partitioning the sample into excess-MRCI quartiles (Table~\ref{tab:a1a2_quartiles}) yields a sharp pattern: Q1 subjects (mean excess $0.090$) reject CPT at $89.5\%$ and BEAST at $79.7\%$ at the $5\%$ level, Q2 subjects (mean excess $0.028$) reject at $77.3\%$ and $63.4\%$, while Q3 subjects (mean excess $0.004$, essentially at the floor) reject at $28.7\%$ and $29.2\%$. Q4 contains subjects whose observed MRCI is at or, due to greedy-solver noise, marginally below the mechanical floor; their rejection rates ($50.3\%$ CPT, $39.2\%$ BEAST) reflect a residual left/right structure that the parametric benchmarks do not generate even when MRCI itself is at the floor. The interpretation is twofold. First, the substantive few-rule pattern documented in Section~\ref{subsec:frontier_empirical} is meaningful for the half of the sample whose data admit excess compression beyond one-sided attention; for the lower half, the MRCI is essentially the mechanical floor and the few-rule story does not apply. Second, even floor-level subjects reject CPT and BEAST at nontrivial rates, reflecting structure in the choice patterns themselves that the parametric benchmarks with i.i.d.\ noise cannot reproduce. 

\paragraph{Block-bootstrapped benchmark.}
Section~\ref{para:per_problem} settles the data side: MRCI is invariant to within-game choice dynamics. The benchmark side is a separate question. The standard test of Section~\ref{subsec:model_benchmark} fixes the menu sequence at the subject's actual menu sequence and randomizes only the per-trial choice indicator via Bernoulli$(q_t(\hat\theta))$; each of the $G\approx 25$ distinct games keeps its 25 trials in every synthetic dataset. To check whether this fixed-game-composition assumption inflates the CPT rejection rate, we run a problem-level block bootstrap: for each synthetic dataset, we resample game IDs with replacement (so a few games may appear twice or three times in the bootstrap menu sequence, others not at all) and draw fresh iid Bernoulli at each resampled trial index with the CPT-predicted $q$ for that game. Because $G$ is small, the bootstrap-induced variation in game multiplicities widens the simulated MRCI distribution substantially. Under this null, $22.7\%$ of subjects reject at the $5\%$ level and $10.9\%$ at the $1\%$ level (Table~\ref{tab:dependence_robust}), against $61.5\%$ and $52.9\%$ under iid CPT. The two rates bracket a methodological range: the standard rate is conditional on the realized sample of games, the bootstrap rate treats that sample as one draw from a stochastic process. We report both. For roughly one in five CPC18 subjects, observed compression exceeds what CPT generates under either simulator.

\paragraph{Secondary diagnostics.}
A battery of further checks confirms that the headline finding is not driven by auxiliary modeling choices and that rejection rates are not inflated by parameter-estimation noise. A Monte Carlo simulation under the CPT null with $M=200$ datasets per subject across 40 subjects shows the parametric bootstrap is conservative at all three nominal levels (empirical size $0.005$ at $\eta=0.01$, $0.029$ at $\eta=0.05$, $0.063$ at $\eta=0.10$; Table~\ref{tab:bootstrap_size}); rejection rates against CPT are therefore not inflated by parameter estimation. About $38\%$ of subjects have an at-the-boundary CPT estimate ($\hat\alpha\to 1$, the linear-utility limit); conditioning on subjects with both parameters interior, the CPT rejection rate at $\eta=0.01$ is $50.5\%$, essentially unchanged from the headline (Table~\ref{tab:cpt_boundary}). 

Random relabeling of left and right within each menu drops average MRCI from $0.545$ to $0.502$, close to the floor, with correlation $-0.16$ between observed and relabeled MRCI (Table~\ref{tab:rob_leftright}); the compression is tied to the actual choice patterns. Increasing the strict-comparison threshold $\varepsilon$ on the representative-payoff gap from $0$ to $0.10$, $0.20$, or $0.30$ leaves average MRCI at $0.546$ and the active-set size at about $4.0$ rules per menu (Table~\ref{tab:rob_activation}); the activation criterion is not on a knife-edge. Additionally, on 100 subjects, MRCI computed with $50$ vs.\ $200$ greedy restarts has correlation $0.9997$ and maximum absolute difference $0.012$, well below the empirical gaps that drive the main findings. Finally, of subjects who reject the permutation benchmark at $\eta=0.01$, $65\%$ also reject CPT and $53\%$ reject BEAST; of those who pass the permutation benchmark, $33\%$ still reject CPT and $31\%$ still reject BEAST (Table~\ref{tab:perm_vs_param}). The two benchmarks convey overlapping but distinct information: the permutation test catches a layer of structure that the parametric benchmarks can in part rationalize, and the parametric benchmarks catch a layer of structure that menu-independent randomness cannot.

\subsection{Replication and Cross-Environment Portability}\label{subsec:external}

To assess whether the compression logic survives across environments, we apply the same framework---same eight-rule library, same activation discipline, same benchmark specifications---to two additional datasets (Table~\ref{tab:cross_dataset}).

\paragraph{CPC15.} The CPC15 dataset \citep{erev_ert_plonsky_2017} is the predecessor to CPC18, sharing the same repeated-choice-with-feedback design but using a different set of 131 lottery problems. We pool three experiments (279 subjects, 440--1,660 observations per subject after filtering to risk problems with feedback). The mean MRCI is $0.525$ and the mechanical floor averages $0.519$, leaving substantive excess of only $0.006$---an order of magnitude smaller than in CPC18. At this magnitude there is essentially no compression to detect: the framework is portable in structure---the eight-rule library, the activation discipline, and the benchmark machinery all carry over without modification---but the substantive few-rule signal is at the threshold of detectability. The benchmarks still reject for a substantial share of subjects ($52\%$ reject CPT at $5\%$, $68\%$ reject EV).

\paragraph{BBS.} The BBS\ dataset \citep{baillon2020} is a sharply different environment: 139 subjects each make 70 one-shot incentivized binary lottery choices with no repetition or feedback, and lotteries have one to four outcomes. The mean MRCI is $0.597$ with mechanical floor $0.582$ and excess $0.015$. Benchmark rejection rates are lower than on CPC18 ($14\%$ reject CPT at $5\%$), reflecting both the small sample size per subject and the different design. 


\paragraph{More demanding Benchmarks.} On the BBS dataset we also implement a behavioral benchmark for reference-dependent choice: a multi-reference specification close to the model characterized in \citet{baillon2020}, where the authors allows for six alternative reference points (status quo, MaxMin, MinMax, X-at-Max-$P$, expected value, prospect itself) and identifies the one best supported by each subject's choices. The full specification and our per-subject estimation procedure are in Appendix~\ref{app:bbs_rd}. Table~\ref{tab:baillon_rd} reports the result. Only $1.4\%$ of subjects reject this benchmark at the $5\%$ level (and $3.6\%$ at the $10\%$ level)---an order of magnitude below the $14\%$ rejection rate against generic CPT. We read the very low rejection rate as evidence that, on the BBS dataset, a reference-dependent model with subject-specific reference points accounts for much of what our rule library captures.

\label{subsec:beast_cpc18}The analogous demanding benchmark for CPC18 is BEAST \citep{erev_ert_plonsky_2017}, the canonical predictive model of risky choice and the baseline of the CPC15/CPC18 prediction competitions. We use it verbatim from the official simulator distributed with \citet{plonsky_etal2025_nhb}, calibrated at the published population-level values; details of the model, the calibrated parameters, and how we generate per-menu choice probabilities for the MRCI Monte Carlo are in Appendix~\ref{app:beast}.

Table~\ref{tab:beast_cpc18} reports the result. $52.9\%$ of subjects reject BEAST at the $5\%$ level and $44.8\%$ at the $1\%$ level, lower than the $61.5\%$ ($52.9\%$) rates against generic CPT but still close to a majority of subjects. The reduction relative to CPT is informative: BEAST absorbs part of the few-rule signal that generic CPT cannot. 

\paragraph{Cross-dataset summary.}
Across three datasets spanning different problem sets, designs, and subject populations, the broad qualitative pattern is preserved: raw MRCI is high, the attention-rule mechanical floor accounts for most of it, and the excess above the floor---though modest---is large enough to be visible against stochastic-utility benchmarks for a meaningful share of subjects. The strength of this signal is environment-dependent: it is strongest in CPC18 (large repeated dataset with feedback), substantially weaker in CPC15 (excess of $0.006$), and in between in BBS (richer one-shot lotteries but only $70$ choices per subject). Two design features plausibly drive the magnitude gap. CPC18's problem set was deliberately curated to elicit a broad span of risky-choice anomalies---Allais paradoxes, certainty-like effects, break-even effects, splitting effects, and rare-event weighting---which raises the alignment between subjects' per-problem choice frequencies and the comparison primitives in our rule library. CPC15 uses a less anomaly-focused problem set, narrowing the band of behavioral phenomena the framework can detect, while BBS contributes only $70$ one-shot decisions per subject, limiting per-subject statistical power even though the lotteries are richer in outcome structure. The framework is portable in structure, but its substantive output is bounded by what each design probes. 


\section{Conclusion}\label{sec:conclusion}

Behavioral theories rests on a working premise: a small number of named mechanisms can organize a wide range of risky choices. The empirical force of the enterprise depends on whether behavior is in fact parsimonious in this sense. This paper asks the question quantitatively. We define a non-parametric index, the Maximum Rule Concentration Index, that measures how tightly a dataset of risky choices can be organized by a fixed library of simple decision rules drawn from canonical theories. The index is a property of the realized data, prior to any model fit; the question is then whether the data exhibit such parsimony, and whether conventional utility models reproduce it.

The answer in three lottery-choice datasets is: yes, the data are parsimonious, and yes, that parsimony is informative. The substantive concentration above what one-sided attention rules generate mechanically is small in absolute terms - on the order of three percent of the index's range - but it is statistically robust. For a majority of subjects, the observed concentration exceeds what expected utility, cumulative prospect theory, and the BEAST model with their standard noise structures generate on the same menus.

The compression concentrates on a small recurring subset of the library: salience-based comparisons, modal-payoff focusing, and regret-type comparisons are the three rules that win the cross-subject MRCI competition across random training/test splits, with about ninety percent of the excess captured by four rules. We do not read this as identification of the mechanisms subjects in fact use - the index is a rationalizability statistic, not a structural estimate - but as a statement about which rules in our maintained library most concentrate the data. The signal is heterogeneous across subjects in a structured way: it is essentially absent in subjects whose choices are dominated by one-sided attention to a single option, and concentrated in the others. 

Across the three datasets the framework is portable in structure but not in magnitude: the few-rule signal is strongest where the experimental design probes a richer space of behavioral phenomena and weakest where the design admits fewer such phenomena to begin with. To the extent that the explanatory tradition of behavioral economics presupposes parsimony of behavior as a phenomenon, we read these findings as a vindication of that presupposition as an empirical matter, rather than as an endorsement of any one theory: the index is mechanism-coarse and does not identify the model people actually follow.

A first limitation is that all objects in the paper are conditional on a maintained library of decision rules; the index measures revealed compressibility, not latent rule usage. The library is deliberately coarse and parameter-free, which makes its findings transparent but does not exhaust the space of plausible mechanisms. A second limitation is that the magnitude of observed parsimony is environment-dependent: experimental designs in behavioral economics often co-evolve with the theories they test, and the parsimony we measure on a given dataset reflects both the realized choices and the menu space those choices were elicited over.
\bibliographystyle{plainnat}
\bibliography{bibliography}


\clearpage

\begin{center}
{\Large\bfseries Tables and Figures}
\end{center}

\begin{figure}[htbp]
    \centering
    \includegraphics[width=1.0\linewidth]{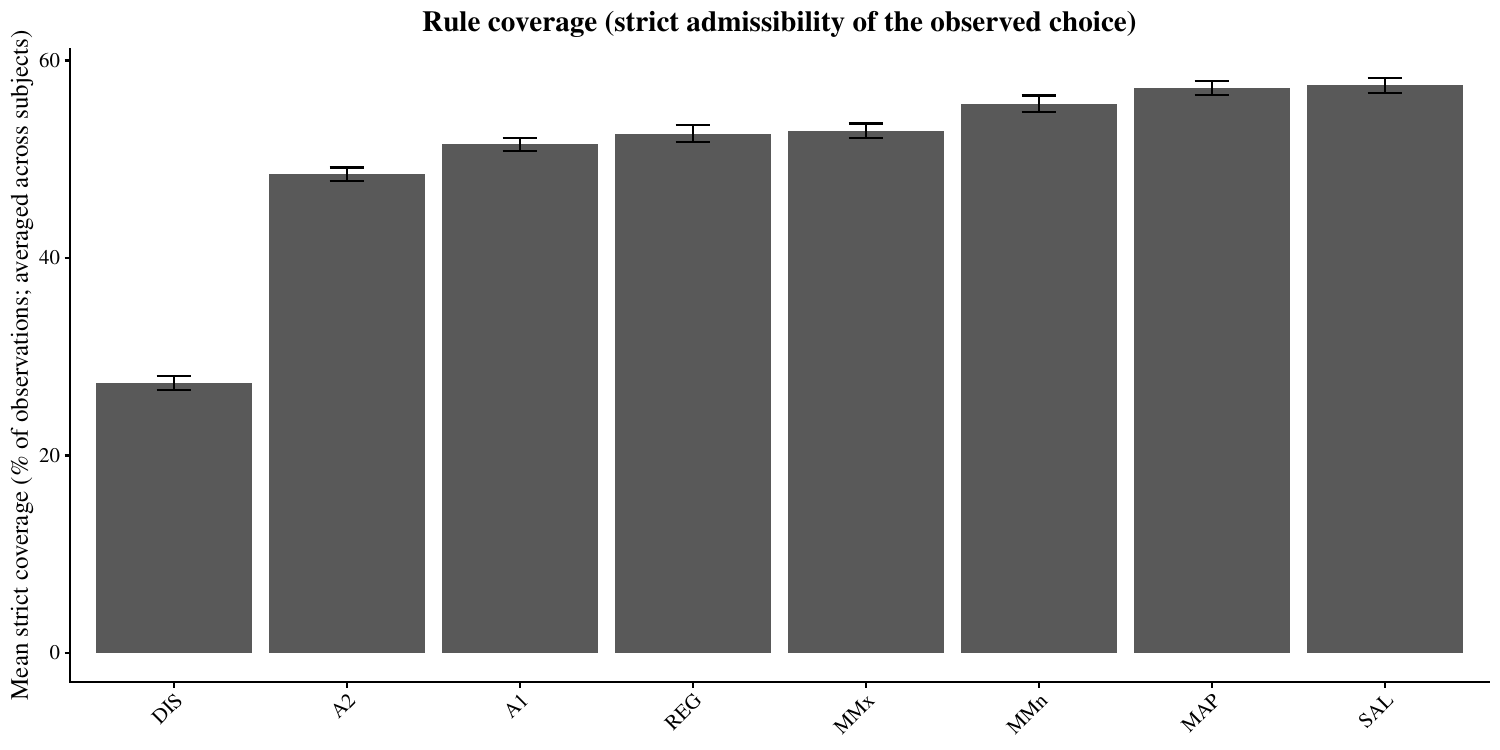}
    \caption{Mean activity coverage by rule. Coverage is the share of a subject's menus at which rule $f$ delivers a strict recommendation that coincides with the realized choice. Cross-subject means with standard error bars.}
    \label{fig:rule_coverage}
\end{figure}

\begin{figure}[htbp]
    \centering
    \begin{subfigure}[b]{0.48\textwidth}
    \includegraphics[width=\linewidth]{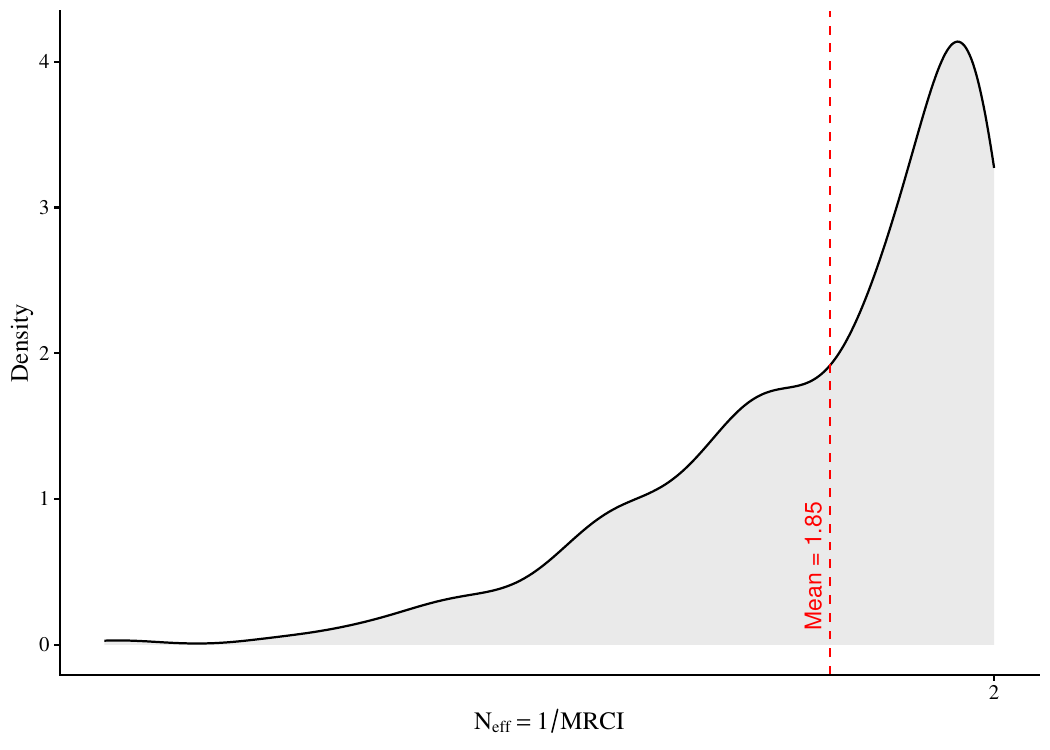}
    \caption{Distribution of $N_{\mathrm{eff}} = 1/\text{MRCI}$}
    \label{fig:neff}
    \end{subfigure}
    \hfill
    \begin{subfigure}[b]{0.48\textwidth}
    \includegraphics[width=\linewidth]{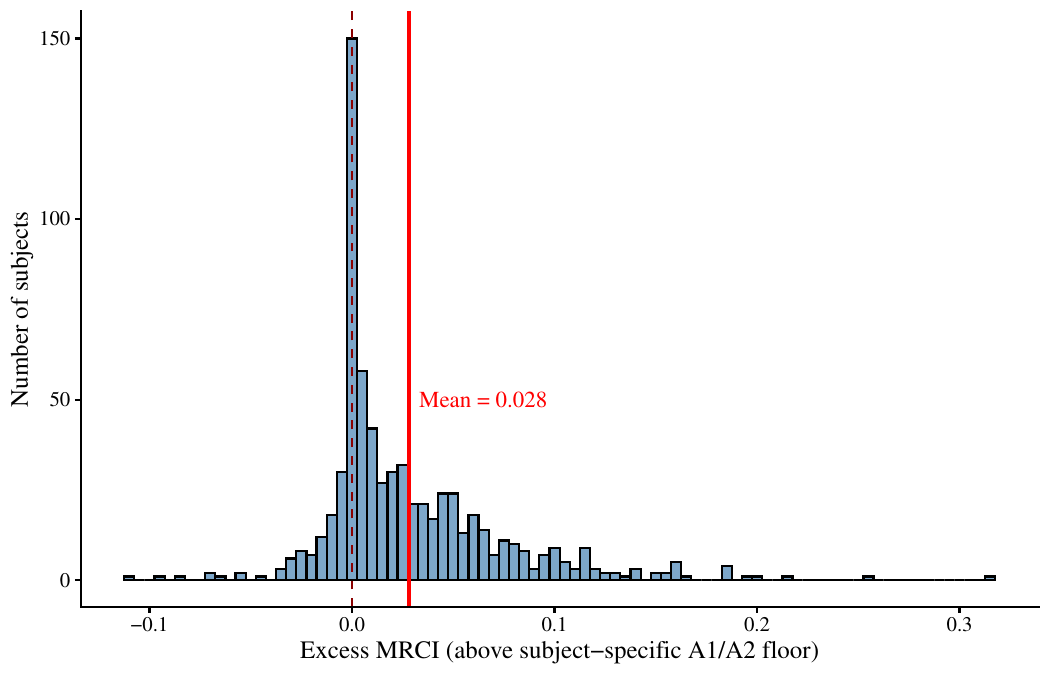}
    \caption{Excess MRCI above attention-rule floor}
    \label{fig:excess}
    \end{subfigure}
    \caption{Distribution of behavioral compression across subjects. Panel (a): effective number of rules. Panel (b): excess MRCI, defined as MRCI minus the subject-specific mechanical floor $\alpha^2 + (1-\alpha)^2$.}
\end{figure}

\begin{figure}[htbp]
    \centering
    \includegraphics[width=1.0\linewidth]{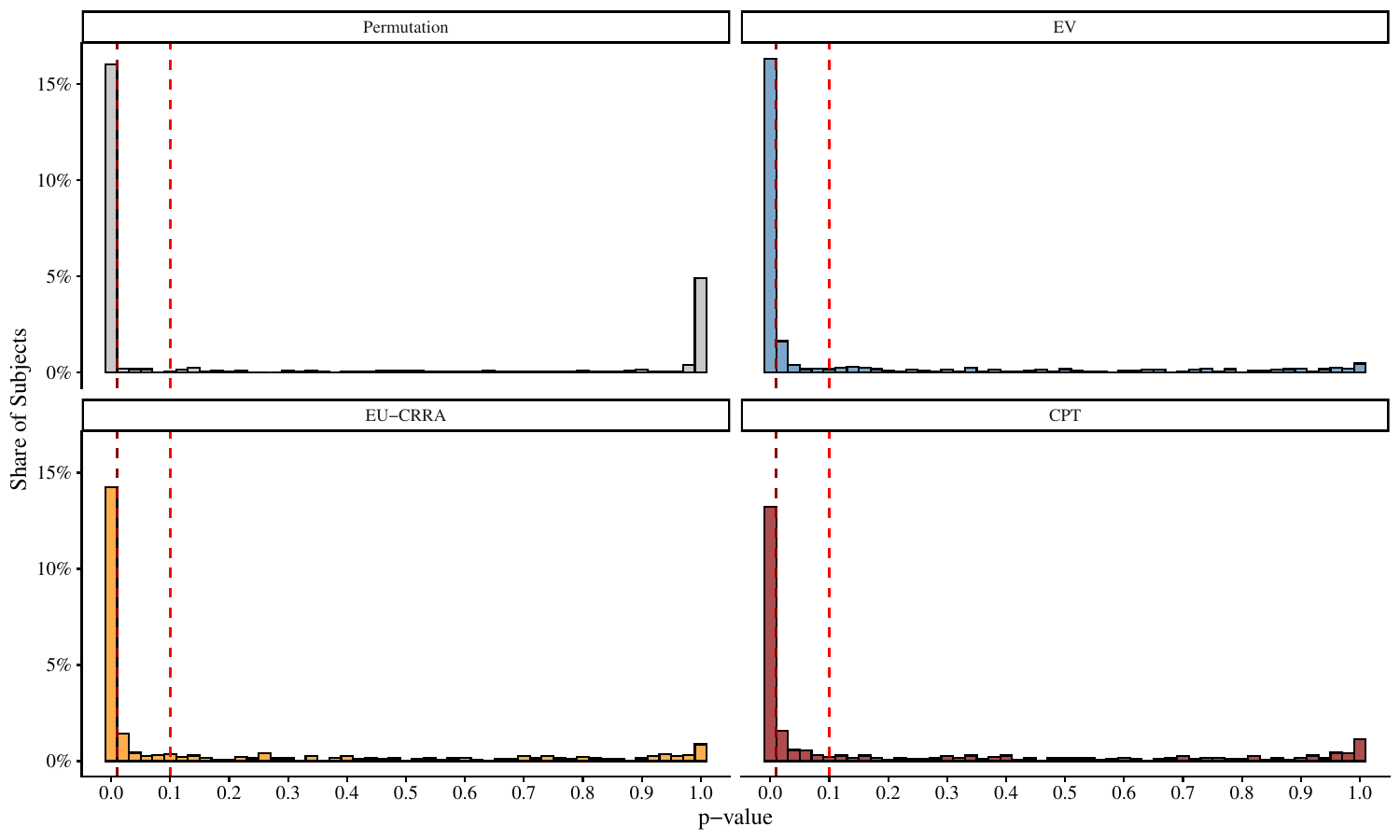}
    \caption{Distribution of $p$-values from the benchmark tests under the full baseline library. Each panel shows the histogram of subject-level $p$-values for one benchmark. Dashed lines indicate the $1\%$ and $10\%$ significance thresholds.}
    \label{fig:test}
\end{figure}

\clearpage

\begin{table}[htbp]
\centering
\caption{Summary Statistics}
\label{tab:summary_stats}
\begin{threeparttable}
\begin{tabular}{lc}
\toprule
Statistic & Value \\
\midrule
\multicolumn{2}{l}{\textit{Demographics}} \\
Number of Subjects & 686 \\
Female Share & 0.54 \\
Age (Mean) & 24.9 \\
Age (Std. Dev.) & 2.9 \\
\addlinespace
\multicolumn{2}{l}{\textit{Rule Concentration}} \\
MRCI (Mean) & 0.545 \\
MRCI (Std. Dev.) & 0.050 \\
$N_{\mathrm{eff}}$ (Mean) & 1.85 \\
Share Single Rule ($N_{\mathrm{eff}} \approx 1$) & 0.000 \\
\bottomrule
\end{tabular}
\begin{tablenotes}
\small
\item \textit{Notes:} $N=686$ subjects from CPC18 (risk problems, feedback condition).
MRCI computed with greedy heuristic ($100$ restarts); $N_{\mathrm{eff}}=1/\mathrm{MRCI}$.
\end{tablenotes}
\end{threeparttable}
\end{table}

\begin{table}[htbp]
\centering
\caption{Bootstrap Confidence Intervals for Mean Excess MRCI}
\label{tab:bootstrap_excess}
\begin{threeparttable}
\begin{tabular}{lcccccc}
\toprule
Dataset & $N$ & Mean excess & 95\% CI & SE & Share $>0$ & 90th pct. \\
\midrule
CPC18 & 686 & 0.0281 & [0.0248, 0.0315] & 0.0017 & 0.72 & 0.0850 \\
CPC15 & 279 & 0.0058 & [0.0028, 0.0091] & 0.0016 & 0.53 & 0.0306 \\
BBS & 139 & 0.0153 & [0.0107, 0.0203] & 0.0025 & 0.42 & 0.0499 \\
\bottomrule
\end{tabular}
\begin{tablenotes}
\small
\item \textit{Notes:} Nonparametric subject-level bootstrap with $B=2000$ resamples. Excess MRCI is MRCI minus subject-specific mechanical floor $\alpha^2 + (1-\alpha)^2$.
\end{tablenotes}
\end{threeparttable}
\end{table}

\begin{table}[htbp]
\centering
\caption{Benchmark Test Results: Share of Subjects Rejecting Each Benchmark}
\label{tab:benchmark_results}
\begin{threeparttable}
\begin{tabular}{lccc}
\toprule
Benchmark & $p \le 0.10$ & $p \le 0.05$ & $p \le 0.01$ \\
\midrule
Permutation & 0.665 & 0.656 & 0.633 \\
EV (risk-neutral) & 0.754 & 0.735 & 0.653 \\
EU-CRRA & 0.673 & 0.644 & 0.570 \\
CPT & 0.655 & 0.615 & 0.529 \\
\bottomrule
\end{tabular}
\begin{tablenotes}
\small
\item \textit{Notes:} Each cell reports the fraction of subjects for whom the observed MRCI
exceeds the benchmark distribution at the indicated significance level.
Permutation: nonparametric test against menu-independent randomness.
EV, EU-CRRA, CPT: model-based benchmarks estimated subject-by-subject by maximum likelihood.
Monte Carlo with $B=500$ synthetic draws per benchmark.
\end{tablenotes}
\end{threeparttable}
\end{table}

\clearpage

\begin{table}[htbp]
\centering
\caption{Rule-Level Frontier: Excess Retention and Selected Rules}
\label{tab:rule_frontier}
\begin{threeparttable}
\begin{tabular}{cccc}
\toprule
$K$ (rules) & MRCI & $R_K$ & Top selections (frequency) \\
\midrule
0 & 0.517 & 0.000 & --- \\
1 & 0.528 & 0.398 & SAL (100\%) \\
2 & 0.537 & 0.707 & MAP (100\%), SAL (95\%), REG (5\%) \\
3 & 0.540 & 0.822 & MAP (100\%), SAL (100\%), REG (80\%) \\
4 & 0.542 & 0.895 & MAP (100\%), SAL (100\%), REG (85\%) \\
5 & 0.545 & 0.999 & MAP (100\%), MMn (100\%), MMx (100\%) \\
6 & 0.545 & 1.000 & DIS (100\%), MAP (100\%), MMn (100\%) \\
\bottomrule
\end{tabular}
\begin{tablenotes}
\small
\item \textit{Notes:} $K$ is the number of rules from $\{\textsc{MMn},\textsc{MMx},\textsc{MAP},\textsc{SAL},\textsc{REG},\textsc{DIS}\}$ included alongside the limited-attention rules \textsc{A1} and \textsc{A2}. Selection performed on a 50\% training sample over $N_{\text{splits}}=100$ random splits; excess retention $R_K$ evaluated out of sample. ``Top selections'' lists the three most frequently chosen rules at each $K$.
\end{tablenotes}
\end{threeparttable}
\end{table}

\begin{table}[ht!]\centering
\renewcommand{\arraystretch}{1.10}\small
\caption{Excess-MRCI quartile analysis on CPC18.}
\label{tab:a1a2_quartiles}
\begin{threeparttable}
\begin{tabular}{lcccc}
\toprule
& Q1 (high excess) & Q2 & Q3 & Q4 (low excess) \\
\midrule
Excess MRCI range & [0.047,\,0.314] & [0.013,\,0.047] & [0.000,\,0.012] & [-0.111,\,0.000] \\
Mean excess MRCI & 0.090 & 0.028 & 0.004 & -0.010 \\
Mean raw MRCI & 0.606 & 0.541 & 0.515 & 0.519 \\
$N$ subjects & 172 & 172 & 171 & 171 \\
\midrule
Reject EV ($\eta=0.05$) & 97.1\% & 95.9\% & 46.8\% & 53.8\% \\
Reject EU ($\eta=0.05$) & 91.9\% & 88.4\% & 29.8\% & 47.4\% \\
Reject CPT ($\eta=0.05$) & 89.5\% & 77.3\% & 28.7\% & 50.3\% \\
Reject CPT ($\eta=0.01$) & 82.6\% & 67.4\% & 19.9\% & 41.5\% \\
Reject BEAST ($\eta=0.05$) & 79.7\% & 63.4\% & 29.2\% & 39.2\% \\
Reject BEAST ($\eta=0.01$) & 76.2\% & 50.6\% & 21.6\% & 30.4\% \\
\bottomrule
\end{tabular}
\begin{tablenotes}\footnotesize
\item Quartiles are formed on subject-level excess MRCI, where excess MRCI = observed MRCI minus the subject-specific mechanical floor $\alpha^2+(1-\alpha)^2$. Both quantities are scalars defined without reference to a specific MRCI-maximizing assignment, so the binning is solver-invariant. Q1 is the highest-excess quartile (subjects whose data admit the strongest non-attention compression); Q4 is the lowest-excess quartile (subjects whose MRCI is essentially at the mechanical floor). The BEAST benchmark is the canonical model of \citet{erev_ert_plonsky_2017}, implemented from the official simulator.
\end{tablenotes}
\end{threeparttable}
\end{table}

\begin{table}[htbp]
\centering
\caption{Dependence-Robust Benchmark Rejection Rates (CPC18)}
\label{tab:dependence_robust}
\begin{threeparttable}
\begin{tabular}{lccc}
\toprule
Benchmark & $N$ & $p\le 0.05$ & $p\le 0.01$ \\
\midrule
CPT block bootstrap & 686 & 0.227 & 0.109 \\
\bottomrule
\end{tabular}
\begin{tablenotes}
\small
\item \textit{Notes:} ``CPT block bootstrap'' simulates from the previously fit CPT model but resamples 25-trial game-level blocks with replacement to break i.i.d.-across-trials assumptions. $B = 100$ Monte Carlo draws.
\end{tablenotes}
\end{threeparttable}
\end{table}

\clearpage

\begin{table}[htbp]
\centering
\caption{Cross-Dataset Comparison}
\label{tab:cross_dataset}
\begin{threeparttable}
\begin{tabular}{lccc}
\toprule
 & CPC18 & CPC15 & BBS \\
\midrule
Subjects & 686 & 279 & 139 \\
Trials/subject & 400--560 & 440--1{,}660 & 70 \\
Design & Repeated, feedback & Repeated, feedback & One-shot \\
\addlinespace
MRCI & 0.545 & 0.525 & 0.597 \\
$N_{\mathrm{eff}}$ & 1.85 & 1.91 & 1.71 \\
Mechanical floor & 0.517 & 0.519 & 0.582 \\
Excess MRCI & 0.0281 & 0.0058 & 0.0153 \\
Active rules/menu & 4.0 & 3.9 & 4.1 \\
\addlinespace
Perm.\ reject (5\%) & 0.656 & 0.523 & 0.367 \\
CPT reject (5\%) & 0.615 & 0.520 & 0.144 \\
\bottomrule
\end{tabular}
\begin{tablenotes}
\small
\item \textit{Notes:} Same eight-rule baseline library applied to all datasets. CPC15 pools three experiments (risk problems, feedback condition). CPC18 uses $B=500$; CPC15 and BBS use $B=50$.
\end{tablenotes}
\end{threeparttable}
\end{table}

\begin{table}[htbp]
\centering
\caption{BBS Reference-Dependent Benchmark on the BBS Dataset}
\label{tab:baillon_rd}
\begin{threeparttable}
\begin{tabular}{lcc}
\toprule
Statistic & Value & \\
\midrule
Subjects & 139 &  \\
Mean MRCI (observed) & 0.597 & \\
Reject RD benchmark at $5\%$ & 0.014 & \\
Reject RD benchmark at $10\%$ & 0.036 & \\
\addlinespace
\multicolumn{3}{l}{\textit{Best-fitting reference rule per subject}} \\
Status Quo & 0.17 & \\
MaxMin & 0.22 & \\
MinMax & 0.12 & \\
X at Max $P$ & 0.11 & \\
Expected Value & 0.14 & \\
Prospect Itself & 0.26 & \\
\bottomrule
\end{tabular}
\begin{tablenotes}
\small
\item \textit{Notes:} Per-subject MLE of the \citet{baillon2020} reference-dependent model (their Eq.~6), with linear consumption utility, Tversky--Kahneman gain--loss utility (same curvature $\alpha$ for gains and losses, loss aversion $\lambda$), and Prelec probability weighting $w(p)=\exp(-(-\ln p)^\gamma)$ (same $\gamma$ for gains and losses); cumulative-from-above decision weights are applied to gain outcomes and cumulative-from-below to loss outcomes, per their Eq.~(1). Synthetic datasets simulated from the fitted choice probabilities ($B=100$). $p$-value is the right-tail Monte Carlo $p$-value as in \eqref{eq:pu_mc}. BBS estimate the same model via Bayesian hierarchical modeling; the present per-subject MLE differs methodologically but uses the same parametric form.
\end{tablenotes}
\end{threeparttable}
\end{table}

\begin{table}[ht!]\centering
\renewcommand{\arraystretch}{1.10}\small
\caption{BEAST benchmark on CPC18.}
\label{tab:beast_cpc18}
\begin{threeparttable}
\begin{tabular}{lcc}
\toprule
& BEAST & Generic CPT \\
\midrule
$N$ subjects & \multicolumn{2}{c}{686} \\
Reject at $\eta=0.05$ & 52.9\% & 61.5\% \\
Reject at $\eta=0.01$ & 44.8\% & 52.9\% \\
\midrule
Mean BEAST $\Pr(\text{left})$ & \multicolumn{2}{c}{0.505} \\
\bottomrule
\end{tabular}
\begin{tablenotes}\footnotesize
\item BEAST is the canonical CPC15/CPC18 baseline of \citet{erev_ert_plonsky_2017}, implemented using the official R simulator from the \citet{plonsky_etal2025_nhb} replication package. The model evaluates each prospect as the sum of (i) the best estimate of expected value, (ii) the mean of $\kappa$ samples drawn by one of four sampling tools (unbiased, equal-weighting, sign, contingent pessimism), and (iii) Gaussian noise with scale $\sigma$. Per-agent parameters are drawn uniformly from $(0,\sigma)$, $\{1,\dots,\kappa\}$, $(0,\beta)$, $(0,\theta)$, $(0,\gamma)$; we use the published calibrated values $\sigma=7$, $\kappa=3$, $\beta=2.6$, $\gamma=0.5$, $\theta=1$ \citep[p.~387]{erev_ert_plonsky_2017}. For each unique CPC18 menu we simulate 1,500 BEAST agents through the full 25-trial sequence and average across trials to obtain a menu-level choice probability $q_t$; the MRCI Monte Carlo benchmark of Section~\ref{subsec:model_benchmark} is then run with these $q_t$ values, drawing $B=500$ synthetic datasets per subject. BEAST is calibrated at the population level rather than fit per subject; this is faithful to the published model and provides an asymmetrically demanding benchmark---more demanding than per-subject CPT in the sense that it was designed and calibrated specifically for the CPC15/CPC18 problem set.
\end{tablenotes}
\end{threeparttable}
\end{table}

\clearpage

\appendix
\setcounter{page}{1}
\renewcommand{\thepage}{\arabic{page}}
\numberwithin{table}{section}
\numberwithin{figure}{section}
\numberwithin{equation}{section}
\begin{LARGE}
    \noindent\textbf{Appendix}
\end{LARGE}

\section{Additional Tables and Figures}\label{app:additional}

\begin{table}[htbp]
\centering
\caption{Benchmark Rejection Rates by CPT Estimation Regime}
\label{tab:cpt_boundary}
\begin{threeparttable}
\begin{tabular}{lccccc}
\toprule
 & & \multicolumn{2}{c}{$p\le 0.05$} & \multicolumn{2}{c}{$p\le 0.01$} \\
\cmidrule(lr){3-4}\cmidrule(lr){5-6}
Subject group & $N$ & EV & CPT & EV & CPT \\
\midrule
All subjects & 686 & 0.735 & 0.615 & 0.653 & 0.529 \\
Interior $\alpha$ & 428 & 0.769 & 0.584 & 0.675 & 0.498 \\
Boundary $\alpha$ & 258 & 0.678 & 0.667 & 0.616 & 0.581 \\
Interior $\alpha$ and $\gamma$ & 378 & 0.778 & 0.587 & 0.688 & 0.505 \\
Boundary in $\alpha$ or $\gamma$ & 308 & 0.682 & 0.649 & 0.610 & 0.558 \\
\bottomrule
\end{tabular}
\begin{tablenotes}
\small
\item \textit{Notes:} $\alpha$ is the CPT value-function curvature and $\gamma$ is the Prelec probability-weighting curvature, both defined in Section~\ref{subsec:specifications} and estimated subject-by-subject. ``Interior $\alpha$'' restricts to subjects with $\hat\alpha\in(0.05,0.95)$ (off both boundaries); ``Interior $\alpha$ and $\gamma$'' additionally requires $\hat\gamma\in(0.32,1.48)$; the two ``Boundary'' rows are the complementary subsamples. Rejection rates are computed from the subject-level parametric-bootstrap $p$-values of Section~\ref{subsec:model_benchmark}.
\end{tablenotes}
\end{threeparttable}
\end{table}

\begin{table}[htbp]
\centering
\caption{Held-out-Menu Rule-Level Frontier (CPC18)}
\label{tab:heldout_menu}
\begin{threeparttable}
\begin{tabular}{cccc}
\toprule
$K$ (rules) & MRCI (held-out) & $R^{\text{test}}_K$ & Top selections (frequency on training menus) \\
\midrule
0 & 0.529 & 0.000 & --- \\
1 & 0.542 & 0.258 & SAL (60\%), MMn (20\%), MAP (18\%) \\
2 & 0.559 & 0.626 & MAP (94\%), SAL (70\%), REG (20\%) \\
3 & 0.565 & 0.753 & MAP (96\%), SAL (74\%), MMn (58\%) \\
4 & 0.572 & 0.899 & MAP (100\%), MMn (94\%), MMx (88\%) \\
5 & 0.577 & 0.998 & MAP (100\%), MMn (100\%), MMx (100\%) \\
6 & 0.577 & 1.000 & DIS (100\%), MAP (100\%), MMn (100\%) \\
\bottomrule
\end{tabular}
\begin{tablenotes}
\small
\item \textit{Notes:} Each random split divides the distinct CPC18 lottery problems into a training half and a held-out half. For each subject and each subset $S\subseteq\{\textsc{MMn},\textsc{MMx},\textsc{MAP},\textsc{SAL},\textsc{REG},\textsc{DIS}\}$, MRCI is computed on the training-problem trials and the held-out-problem trials separately. The $K$-rule subset $S^\star_K$ that maximizes cross-subject mean training-problem MRCI is selected; $R^{\text{test}}_K$ is the corresponding excess retention on held-out problems. Averaged over $50$ random splits.
\end{tablenotes}
\end{threeparttable}
\end{table}

\begin{table}[htbp]
\centering
\caption{Empirical Size of the Parametric-Bootstrap CPT Benchmark}
\label{tab:bootstrap_size}
\begin{threeparttable}
\begin{tabular}{ccc}
\toprule
Nominal $\eta$ & Empirical size & SE \\
\midrule
0.01 & 0.005 & 0.001 \\
0.05 & 0.029 & 0.003 \\
0.10 & 0.063 & 0.004 \\
\bottomrule
\end{tabular}
\begin{tablenotes}
\small
\item \textit{Notes:} Empirical size of the parametric-bootstrap CPT test under the null. For each of $40$ randomly sampled subjects, $M=200$ datasets were simulated from that subject's prior CPT fit on the actual menu sequence; CPT was re-estimated on each simulated dataset and a $B=100$ Monte Carlo $p$-value computed. The empirical size is the fraction of simulated datasets for which the $p$-value falls below the nominal level $\eta$.
\end{tablenotes}
\end{threeparttable}
\end{table}

\begin{table}[htbp]
\centering
\caption{Robustness: Left-Right Relabeling}
\label{tab:rob_leftright}
\begin{threeparttable}
\begin{tabular}{lc}
\toprule
Statistic & Value \\
\midrule
Mean MRCI (observed) & 0.545 \\
Mean MRCI (random relabeling) & 0.502 \\
Mean difference & 0.0430 \\
Correlation & -0.163 \\
\bottomrule
\end{tabular}
\begin{tablenotes}
\small
\item \textit{Notes:} Relabeling randomly swaps $L^1$/$L^2$ labels within each menu (prob.~0.5), destroying the systematic choice-position pairing. Mean over 20 relabelings per subject.
\end{tablenotes}
\end{threeparttable}
\end{table}

\begin{table}[htbp]
\centering
\caption{Robustness: Activation Discipline}
\label{tab:rob_activation}
\begin{threeparttable}
\begin{tabular}{cccc}
\toprule
$\varepsilon$ & Feasible & MRCI & Active rules/menu \\
\midrule
0.00 & 686 & 0.546 & 4.0 \\
0.10 & 686 & 0.546 & 4.0 \\
0.20 & 686 & 0.546 & 4.0 \\
0.30 & 686 & 0.546 & 4.0 \\
\bottomrule
\end{tabular}
\begin{tablenotes}
\small
\item \textit{Notes:} The activation criterion (Section~\ref{subsec:activation}) declares rule $f$ active at observation $t$ when the representative payoff $f$ assigns to the chosen lottery strictly exceeds that of the unchosen lottery. $\varepsilon=0$ is the baseline strict comparison; larger $\varepsilon$ requires the gap in representative payoffs to exceed $\varepsilon$. The attention rules \textsc{A1} and \textsc{A2} are always counted as active on the side they attend to, independent of $\varepsilon$.
\end{tablenotes}
\end{threeparttable}
\end{table}

\begin{table}[ht!]\centering
\renewcommand{\arraystretch}{1.10}\small
\caption{Joint rejection of the permutation and parametric benchmarks at $\eta=0.01$.}
\label{tab:perm_vs_param}
\begin{threeparttable}
\begin{tabular}{lcc}
\toprule
Cell & $N$ subjects & Share \\
\midrule
\multicolumn{3}{l}{\textit{Benchmark: CPT}} \\
Both reject ($\eta=0.01$) & 281 & 41.0\% \\
Permutation only & 153 & 22.3\% \\
CPT only & 82 & 12.0\% \\
Neither & 170 & 24.8\% \\
\addlinespace[2pt]
$\Pr($CPT$\,\text{reject}\mid\text{perm reject})$ ($\eta=0.01$) & \multicolumn{2}{c}{64.7\%} \\
$\Pr($CPT$\,\text{reject}\mid\text{perm pass})$ ($\eta=0.01$) & \multicolumn{2}{c}{32.5\%} \\
\addlinespace[4pt]
\multicolumn{3}{l}{\textit{Benchmark: BEAST}} \\
Both reject ($\eta=0.01$) & 229 & 33.4\% \\
Permutation only & 205 & 29.9\% \\
BEAST only & 78 & 11.4\% \\
Neither & 174 & 25.4\% \\
\addlinespace[2pt]
$\Pr($BEAST$\,\text{reject}\mid\text{perm reject})$ ($\eta=0.01$) & \multicolumn{2}{c}{52.8\%} \\
$\Pr($BEAST$\,\text{reject}\mid\text{perm pass})$ ($\eta=0.01$) & \multicolumn{2}{c}{31.0\%} \\
\addlinespace[4pt]
\bottomrule
\end{tabular}
\begin{tablenotes}\footnotesize
\item "Permutation" denotes the i.i.d.\ benchmark of Section~\ref{subsec:permutation_special}; "CPT" the parametric bootstrap of Section~\ref{subsec:specifications}; "BEAST" (when reported) the tool-mixture benchmark of Section~\ref{subsec:beast_cpc18}. Conditional rejection rates isolate the marginal information of the parametric benchmark beyond the permutation screen.
\end{tablenotes}
\end{threeparttable}
\end{table}

\clearpage

\section{Rule Library: Formal Definitions}\label{ec:rules}

This section provides the formal definitions of the decision rules used in the analysis. The maintained baseline library contains eight rules: 
\[\mathcal F=\{\textsc{MMn}, \textsc{MMx}, \textsc{MAP}, \textsc{SAL}, \textsc{REG}, \textsc{DIS}, \textsc{A1}, \textsc{A2}\}.
\] The rules below are deliberately coarse, parameter-free implementations of mechanisms drawn from canonical behavioral theories of risky choice; they are not the mechanisms themselves. A given mechanism (e.g., salience theory of \citet{bordalo2012_risk}) may be parameterized in many ways, and our \textsc{SAL} rule corresponds to a particular parameter-free implementation of the salience comparison. The MRCI is therefore conditional on this specific library, and all empirical claims in the paper should be read as statements about how parsimoniously the data align with these rule prescriptions, not as identification of the underlying psychological mechanisms. Each rule $f$ maps an objective binary menu $(L^1,L^2)$ to a perceived menu $(\pi_f(L^1;L^2), \pi_f(L^2;L^1))$.

\paragraph{Common state-space construction.}
For \textsc{SAL} and \textsc{REG}, the rule prescription depends on a state-by-state comparison of the two lotteries and therefore requires a common state space on which $L^1$ and $L^2$ are represented as state-contingent acts. We use the product space induced by the independence of $L^1$ and $L^2$. If $L^i$ has support $\{z^i_1,\dots,z^i_{m_i}\}$ with probabilities $(\pi^i_1,\dots,\pi^i_{m_i})$, the common state space is $S=\{1,\dots,m_1\}\times\{1,\dots,m_2\}$ with state probabilities $p_{(k,\ell)}=\pi^1_k\,\pi^2_\ell$. Each lottery $L^i$ is the act $u^i:S\to\mathbb R$ with $u^1(k,\ell)=z^1_k$ and $u^2(k,\ell)=z^2_\ell$. The state-by-state comparisons used by \textsc{SAL} and \textsc{REG} below are with respect to this construction. For binary lotteries, $|S|=4$. The independence assumption matters: under a comonotonic pairing (e.g., the $k$-th worst outcome of $L^1$ paired with the $k$-th worst of $L^2$), \textsc{SAL} and \textsc{REG} would prescribe different recommendations on some menus. We adopt the independent pairing throughout, which is the standard convention in the salience and regret theory literatures and corresponds to evaluating the two lotteries as if they were resolved by independent random draws.

\paragraph{MaxMin (\textsc{MMn}).}
The decision maker replaces $L^i$ by its sure minimum payoff: $\pi_{\textsc{MMn}}(L^i;L^j)=(z^i_1 \text{ w.p.\ }1)$. This ``safety-first'' rule focuses exclusively on worst-case outcomes.

\paragraph{MinMax (\textsc{MMx}).}
The decision maker replaces $L^i$ by its sure maximum payoff: $\pi_{\textsc{MMx}}(L^i;L^j)=(z^i_{m_i} \text{ w.p.\ }1)$.

\paragraph{Max at $p$ (\textsc{MAP}).}
The decision maker replaces $L^i$ by the sure modal payoff of $L^i$ (the payoff with the highest probability; ties broken by selecting the highest payoff):
\[
\pi_{\textsc{MAP}}(L^i;L^j)=\big(z^i_{k^\star(L^i)} \text{ w.p.\ }1\big), \qquad k^\star(L^i)\in\arg\max\{z^i_k: \pi^i_k=\max_q \pi^i_q\}.
\]

\paragraph{Salience (\textsc{SAL}).}
On the common state space $S$ defined above, the salience of state $s$ is $\sigma_s(L^1,L^2):=|u^1(s)-u^2(s)|/(|u^1(s)|+|u^2(s)|+1)$. The rule collapses each lottery to its payoff in the most salient state:
\[
\pi_{\textsc{SAL}}(L^i;L^j) := \big(u^i(s^{\mathrm{sal}}) \text{ w.p.\ }1\big), \qquad s^{\mathrm{sal}} \in \arg\max_{s\in S} \sigma_s(L^1,L^2).
\]
Ties in $\sigma_s$ are broken by the largest absolute payoff difference $|u^1(s)-u^2(s)|$.

\paragraph{Regret (\textsc{REG}).}
On the same state space $S$, the worst foregone payoff for option $L^i$ is $\rho^i(L^1,L^2):=\max_{s\in S}\max\{u^j(s)-u^i(s),0\}$, the largest amount by which the alternative could exceed $L^i$ in any state. The rule maps $L^i$ to $\pi_{\textsc{REG}}(L^i;L^j):=(-\rho^i \text{ w.p.\ }1)$, so the decision maker minimizes the maximum foregone gain.

\paragraph{Disappointment (\textsc{DIS}).}
The reference payoff is the modal payoff of $L^i$: $\mathrm{ref}(L^i):=\max\{z:\Pr(u^i_t(s)=z)=\max_y\Pr(u^i_t(s)=y)\}$. The disappointment index is $D(L^i):=\max\{|(\mathrm{ref}(L^i)-z)/(|\mathrm{ref}(L^i)|+|z|+1)|: z<\mathrm{ref}(L^i)\}$, with $D(L^i)=0$ if no payoff is below the reference. The rule maps $\pi_{\textsc{DIS}}(L^i;L^j):=(-D(L^i) \text{ w.p.\ }1)$.

\paragraph{Attention to $L^1$ (\textsc{A1}) and Attention to $L^2$ (\textsc{A2}).}
Under \textsc{A1}, the second lottery is replaced by a dominated baseline: $\pi_{\textsc{A1}}(L^1;L^2)=L^1$ and $\pi_{\textsc{A1}}(L^2;L^1)=(-M \text{ w.p.\ }1)$, where $M$ is a large positive constant set below all payoffs in the data. Symmetrically, \textsc{A2} replaces $L^1$ by $-M$. In the empirical implementation we set $M = 10^6$, which is well below the minimum payoff in any of the datasets we use; results are insensitive to any choice of $M$ that lies below all observed payoffs. \textsc{A1} and \textsc{A2} encode one-sided limited consideration: the decision maker attends only to the left option (\textsc{A1}) or only to the right option (\textsc{A2}). They are substantive behavioral rules in their own right, capturing the limited-consideration channel of \citet{manzini2007,masatlioglu2012,manzini_mariotti2014}, while also ensuring that the activity set $\mathcal F_t^{\mathrm{act}}$ is non-empty at every binary menu. Any subject's MRCI is bounded below by the corresponding mechanical floor $\alpha^2 + (1-\alpha)^2$, where $\alpha$ is the subject's left-choice share.


\section{A Multi-Reference Reference-Dependent Benchmark (BBS)}\label{app:bbs_rd}

This appendix gives the  reference-dependent (RD) benchmark used in Section~\ref{subsec:external}.

\paragraph{Model.}
The model is inspired from the model characterized in \citet{baillon2020},  Equations~(1) and~(6). For a lottery $F$ with ordered outcomes $z_1<\cdots<z_m$ and probabilities $\pi_1,\dots,\pi_m$, a reference point $r$, and parameters $(\alpha,\lambda,\gamma)$,
\begin{equation}\label{eq:rd_baillon_app}
\mathrm{RD}(F;\,r,\alpha,\lambda,\gamma)\ =\ \sum_k z_k\,\pi_k\ +\ \mathrm{PT}_r(F;\alpha,\lambda,\gamma),
\end{equation}
where the first sum is linear consumption utility and $\mathrm{PT}_r$ is a Tversky--Kahneman gain--loss component that splits at $r$:
\begin{align*}
\mathrm{PT}_r(F;\alpha,\lambda,\gamma)\ =\ &\sum_{k:\,z_k\geq r}(z_k-r)^{\alpha}\,\big[w(P_{\geq k};\gamma)-w(P_{>k};\gamma)\big] \\
&-\lambda\sum_{k:\,z_k<r}(r-z_k)^{\alpha}\,\big[w(P_{\leq k};\gamma)-w(P_{<k};\gamma)\big].
\end{align*}
We follow \citet{baillon2020} in using the same curvature $\alpha\in(0,1]$ for gains and losses, the same Prelec probability-weighting function $w(p;\gamma)=\exp\!\big(-(-\ln p)^{\gamma}\big)$ for both sides, and a single loss-aversion parameter $\lambda>0$ on the loss side. The cumulative probabilities are $P_{\geq k}=\sum_{j\geq k}\pi_j$, $P_{>k}=\sum_{j>k}\pi_j$, $P_{\leq k}=\sum_{j\leq k}\pi_j$, $P_{<k}=\sum_{j<k}\pi_j$. Gain outcomes use cumulative-from-above decision weights, loss outcomes use cumulative-from-below decision weights, as in their Equation~(1). Choice probabilities follow a Luce rule with precision $\xi>0$:
\[
\Pr(F\succ G)\ =\ \frac{1}{1+\exp\!\big(\xi[\mathrm{RD}(G)-\mathrm{RD}(F)]\big)}.
\]

\paragraph{Reference-point rules.}
The reference point is selected per subject from \citet[Table~1]{baillon2020}'s six-rule taxonomy. Five rules pin $r$ deterministically:
\begin{itemize}\itemsep0.1em
\item \emph{Status Quo}: $r=0$ (the participation fee baseline; all outcomes in the BBS experiment are gains under this rule, see \citealp[p.~96]{baillon2020}).
\item \emph{MaxMin}: $r=\max(\min L^1,\min L^2)$, the maximum of the minimum outcomes of the two prospects.
\item \emph{MinMax}: $r=\min(\max L^1,\max L^2)$, the minimum of the maximum outcomes.
\item \emph{X at Max~$P$}: $r$ equals the highest-probability outcome across the menu.
\item \emph{Expected Value} (prospect-specific): $r=E[F]$ for each prospect $F$.
\end{itemize}
The sixth rule, \emph{Prospect Itself}, is the only stochastic reference: the reference distribution is the prospect itself, $R=F$, so the gain--loss component in~\eqref{eq:rd_baillon_app} integrates over $F$ rather than being evaluated at a single point:
\[
\mathrm{RD}_{\mathrm{PI}}(F)\ =\ \sum_k z_k\,\pi_k\ +\ \sum_k \pi_k\,\mathrm{PT}_{z_k}(F;\alpha,\lambda,\gamma).
\]

\paragraph{Estimation and selection.}
We estimate $(\alpha,\lambda,\gamma,\xi)$ by per-subject maximum likelihood, in contrast to the Bayesian hierarchical estimation of \citet{baillon2020}.  Numerically, $\alpha\in(0.1,1.5)$ and $\gamma\in(0.05,2.0)$ are reparameterized through affine logistic transforms, $\lambda,\xi>0$ via exponentials. Six starting points spanning the parameter space are used, and the Nelder--Mead optimum is taken across all of them.

Selection of the per-subject reference-point rule is by AIC, $2\,\mathrm{NLL}+2k_{\mathrm{eff}}$, rather than by raw NLL. Under Status Quo with $r=0$ and the BBS gain-only outcomes, no outcomes are losses and the loss-aversion parameter $\lambda$ does not enter the likelihood; the effective parameter count for that rule is therefore $k_{\mathrm{eff}}=3$, against $k_{\mathrm{eff}}=4$ for the other five rules. Picking by raw NLL would systematically reward the over-parameterized rules. The Bayesian hierarchical approach achieves an analogous complexity penalty implicitly through the marginal likelihood.

The resulting per-subject classification ranks Prospect Itself ($26\%$), MaxMin ($22\%$), and Status Quo ($17\%$) as the three most frequent rules in our sample of 139 subjects, with the remaining mass spread across Expected Value ($14\%$), MinMax ($12\%$), and X at Max~$P$ ($11\%$). The qualitative ranking matches the Bayesian hierarchical classification of \citet[Figure~5]{baillon2020} (their restricted ``sharply classified'' subset of 107 of 139 subjects has Status Quo $40\%$, MaxMin $31\%$, Prospect Itself $18\%$, MinMax $9\%$, EV $2\%$, X at Max~$P$ $0\%$): SQ, MaxMin, and Prospect Itself are the three dominant rules in both classifications. The per-subject MLE here is noisier without their hierarchical shrinkage and spreads a larger tail of subjects across EV and X at Max~$P$.

\paragraph{Benchmark Monte Carlo.}
For each subject, the AIC-selected rule and its fitted parameters define menu-level left-choice probabilities $\hat q_t$, which are then plugged into the MRCI Monte Carlo benchmark of Section~\ref{subsec:model_benchmark} with $B=100$ synthetic draws. The resulting subject-level $p$-values are reported in Table~\ref{tab:baillon_rd}.


\section{The BEAST Benchmark on CPC18}\label{app:beast}

This appendix gives the model specification, calibrated parameter values, and Monte Carlo procedure for the BEAST benchmark used in Section~\ref{subsec:beast_cpc18}.

\paragraph{Model.}
BEAST (Best Estimate and Sampling Tools, \citealp{erev_ert_plonsky_2017}) is a sample-based model of risky choice. Each option is evaluated as the sum of (i) the best estimate of its expected value, (ii) the mean of $\kappa$ outcomes drawn by one of four sampling tools---an unbiased sample from the lottery, a uniform (equal-weighting) sample, a sign tool, and a contingent-pessimism tool---and (iii) Gaussian noise with scale $\sigma$. Per-agent parameters are drawn uniformly from $(0,\sigma)$, $\{1,\dots,\kappa\}$, $(0,\beta)$, $(0,\theta)$, and $(0,\gamma)$. \citet[\S2.1]{plonsky_etal2025_nhb} characterize the resulting decision process as reflecting four behavioral tendencies: minimizing immediate regret, maximizing the worst outcome, maximizing the probability of the best payoff sign, and choosing the option that performs best if all outcomes were equally likely. The full description of the model is in \citet{erev_ert_plonsky_2017}; we use the official simulator without modification.

\paragraph{Calibration.}
We use the population-level calibrated values published in \citet[p.~387]{erev_ert_plonsky_2017}: $\sigma=7$, $\kappa=3$, $\beta=2.6$, $\gamma=0.5$, $\theta=1$.\footnote{The original calibration also reports an ambiguity-aversion parameter $\varphi=0.07$ which is irrelevant here because we restrict attention to the non-ambiguous menus.} BEAST is calibrated at the population level rather than fit per subject, so the same parameter values are used throughout.

\paragraph{Implementation and Monte Carlo.}
We source the official R simulator \texttt{CPC15\_BEASTsimulation()} from the \citet{plonsky_etal2025_nhb} replication package without modification. For each unique CPC18 menu we simulate $1{,}500$ BEAST agents through the full 25-trial sequence and average across trials to obtain a menu-level left-choice probability $\hat q_t$. We then run the MRCI Monte Carlo benchmark of Section~\ref{subsec:model_benchmark} with $B=500$ synthetic datasets per subject. The resulting subject-level $p$-values are reported in Table~\ref{tab:beast_cpc18}.


\section{Proofs}\label{ec:proofs}

\subsection{Proof of Proposition~\ref{prop:properties}}\label{proof:properties}

Let $T$ be the number of menus, $s_f(\mathbf y):=\frac{1}{T}\sum_{t=1}^T y_{t,f}$ and
$\mathcal F^\star(\mathbf y):=\{f\in\mathcal F:\ s_f(\mathbf y)>0\}$.

\smallskip
\noindent\emph{(i) Bounds.} Fix any feasible profile $\mathbf y\in\mathcal C(D)$ and let $s_f(\mathbf y)\ge 0$ denote the induced rule shares with $\sum_{f\in\mathcal F}s_f(\mathbf y)=1$. By definition,
\[
\mathrm{HHI}(\mathbf y)=\sum_{f\in\mathcal F} s_f(\mathbf y)^2.
\]
Upper bound: since $s_f(\mathbf y)\in[0,1]$ and the shares sum to one, we have $\sum_f s_f^2\le 1$, with equality only when one share equals $1$ and all others are $0$. Hence $\mathrm{HHI}(\mathbf y)\le 1$ for every feasible $\mathbf y$, and therefore $\mathrm{MRCI}=\max_{\mathbf y\in\mathcal C(D)}\mathrm{HHI}(\mathbf y)\le 1$. The proof for the lower bound is provided in the text.

\smallskip
\noindent\emph{(ii) $\mathrm{MRCI}=1$ iff a one--rule rationalization exists.}
If $\mathrm{MRCI}=1$, let $\mathbf y^\star$ be an optimizer. Then $\sum_f s_f(\mathbf y^\star)^2 \le \max_f s_f(\mathbf y^\star)\cdot\sum_f s_f(\mathbf y^\star)=\max_f s_f(\mathbf y^\star)\le 1$, with equality throughout requiring $s_f(\mathbf y^\star)\in\{0,\max_f s_f(\mathbf y^\star)\}$ for every $f$ and $\max_f s_f(\mathbf y^\star)=1$; combined with $\sum_f s_f(\mathbf y^\star)=1$ this forces $s_{f^\dagger}(\mathbf y^\star)=1$ for a single $f^\dagger$ and $s_f(\mathbf y^\star)=0$ otherwise. Conversely, if some $f\in\mathcal F$ satisfies $f\in\mathcal F_t^{\mathrm{act}}$ for all $t$, then assigning $f$ to every menu is feasible and yields $\mathrm{HHI}=1$.

\smallskip
\noindent\emph{(iii) Replication invariance.}
Form $D^{(k)}$ by stacking $k$ copies of $D$, so $T^{(k)}=kT$.

\emph{($\ge$)} Replicate any feasible assignment $\mathbf y\in\mathcal C(D)$ blockwise to obtain $\mathbf y^{(k)}$.
The active sets $\mathcal F_t^{\mathrm{act}}$ repeat in each block, and admissibility is also repeated across blocks, so $\mathbf y^{(k)}\in\mathcal C(D^{(k)})$.
Let $n_f=\sum_t y_{t,f}$ and $n_f^{(k)}=\sum_{t'} y^{(k)}_{t',f}=k\,n_f$. Then
\[
\mathrm{HHI}(\mathbf y^{(k)})=\sum_f\Big(\frac{n^{(k)}_f}{T^{(k)}}\Big)^2
=\sum_f\Big(\frac{k\,n_f}{k\,T}\Big)^2
=\sum_f\Big(\frac{n_f}{T}\Big)^2
=\mathrm{HHI}(\mathbf y).
\]
Maximizing over $\mathbf y$ gives $\mathrm{MRCI}(D^{(k)})\ge\mathrm{MRCI}(D)$.

\emph{($\le$)} Conversely, take any $\mathbf y'\in\mathcal C(D^{(k)})$. For each block $b\in\{1,\dots,k\}$, its restriction $\mathbf y'^{(b)}$ to the $b$-th block lies in $\mathcal C(D)$ (active sets are identical across blocks), with within-block shares $s_f^{(b)}(\mathbf y')=\frac{1}{T}\sum_{t}y'^{(b)}_{t,f}$. The overall shares satisfy $s_f(\mathbf y')=\frac{1}{k}\sum_{b=1}^k s_f^{(b)}(\mathbf y')$. Convexity of $x\mapsto x^2$ (Jensen's inequality) gives
\[
\mathrm{HHI}(\mathbf y')=\sum_f s_f(\mathbf y')^2 \le \frac{1}{k}\sum_{b=1}^k \sum_f s_f^{(b)}(\mathbf y')^2 = \frac{1}{k}\sum_{b=1}^k \mathrm{HHI}(\mathbf y'^{(b)}) \le \mathrm{MRCI}(D),
\]
since each $\mathbf y'^{(b)}\in\mathcal C(D)$. Maximizing over $\mathbf y'$ gives $\mathrm{MRCI}(D^{(k)})\le\mathrm{MRCI}(D)$. Combining the two inequalities, $\mathrm{MRCI}(D^{(k)})=\mathrm{MRCI}(D)$.

\smallskip
\noindent\emph{(iv) Forced choice under unique active rule.}
Fix $t$ with $\mathcal F_t^{\mathrm{act}}=\{f^\ast\}$. By the activation discipline, feasibility at menu $t$ requires selecting an active rule; thus any $\mathbf y\in\mathcal C(D)$ must satisfy $y_{t,f^\ast}=1$ and $y_{t,f}=0$ for all $f\neq f^\ast$. Hence every optimizer necessarily uses $f^\ast$ at $t$.

\subsection{Proof of Proposition~\ref{prop:monotone-necessary}}\label{proof:monotone}

\emph{Monotonicity.} Let $y^\star\in\mathcal C(D;\mathcal F)$ attain $\mathrm{MRCI}(D;\mathcal F)$. Define $y'\in\mathcal C(D;\mathcal F\cup\{f\})$ by $y'_{t,r}=y^\star_{t,r}$ for $r\in\mathcal F$ and $y'_{t,f}=0$ for all $t$. Then $\mathrm{HHI}(y')=\mathrm{HHI}(y^\star)$, hence the maximum under the larger feasible set is weakly higher.

\smallskip
\emph{Necessary condition for strictness.} Fix any $\mathcal F$--optimizer $y^\star$ with shares $s^\star=(s_r^\star)_{r\in\mathcal F}$ and set $s_f^\star=0$. Consider any $\mathcal F\cup\{f\}$--feasible reassignment that moves fractions $\delta_r\in[0,s_r^\star]$ from rules $r\in\mathcal F$ to $f$, producing new shares
\[
s_f'=\sum_{r}\delta_r,\qquad s_r'=s_r^\star-\delta_r\quad(r\in\mathcal F).
\]
Feasibility requires menu--level admissibility for each reassigned observation (the activation discipline). The Herfindahl change relative to $s_f^\star=0$ is
\begin{equation}\label{eq:DeltaHHI}
\Delta \mathrm{HHI}
=\Big(\sum_r\delta_r\Big)^2+\sum_r(s_r^\star-\delta_r)^2-\sum_r(s_r^\star)^2
= (\sum_r\delta_r)^2\ -\ 2\sum_r s_r^\star \delta_r\ +\ \sum_r \delta_r^2.
\end{equation}
If mass is taken from a \emph{single} donor $r$ ($\delta_r>0$, $\delta_{q\neq r}=0$), then
\[
\Delta \mathrm{HHI}
= 2\,\delta_r(\delta_r - s_r^\star)\ \le\ 0,
\]
with equality only for $\delta_r\in\{0,s_r^\star\}$ (no move or full relabeling). Hence any one--donor reassignment cannot strictly raise concentration at $s_f^\star=0$.

Therefore, a \emph{necessary} condition for strict improvement over $\mathrm{HHI}(y^\star)$ is: there exist \emph{two} rules $r\neq s$ with $s_r^\star,s_s^\star>0$ and a feasible reassignment with $\delta_r>0$ and $\delta_s>0$. Feasibility, in turn, requires that $f$ be locally admissible on at least one menu currently assigned to $r$ and on at least one menu currently assigned to $s$. If $f$ is never locally admissible then all feasible moves satisfy $\Delta \mathrm{HHI}\le 0$, and the maximum under $\mathcal F\cup\{f\}$ cannot strictly exceed that under $\mathcal F$. Since the argument applies to every $\mathcal F$--optimizer $y^\star$, the stated condition is necessary for strict inequality of the maxima. The extension from a single new rule to general $\mathcal F'\supseteq\mathcal F$ is immediate: if no single-rule extension $\mathcal F\cup\{f\}$ with $f\in\mathcal F'\setminus\mathcal F$ strictly improves the maximum, then for any $\mathcal F'$-feasible $\mathbf y'$ one can iteratively reassign observations using a new rule $f_j\in\mathcal F'\setminus\mathcal F$ to a positive-share $\mathcal F$-rule, weakly raising $\mathrm{HHI}$ by the cross-term $2\,s_{f_j}s_g\ge 0$ at each step, until the resulting assignment is $\mathcal F$-feasible with $\mathrm{HHI}\le\mathrm{MRCI}(D;\mathcal F)$; hence $\mathrm{MRCI}(D;\mathcal F')=\mathrm{MRCI}(D;\mathcal F)$. Contrapositively, $\mathrm{MRCI}(D;\mathcal F')>\mathrm{MRCI}(D;\mathcal F)$ implies the unifier property holds for some $f\in\mathcal F'\setminus\mathcal F$.

\subsection{Proof of Proposition~\ref{prop:utility_exact}}\label{proof:utility_exact}

Fix $T \geq 1$ and condition throughout on the realized deterministic menu sequence $A_T$.
Under Assumption~\ref{ass:u0}, the observed dataset $D_T$ is generated by independent Bernoulli indicators
$d_{t,T} \sim \mathrm{Bernoulli}(q_{t,T}(\theta_0))$, $t=1,\dots,T$,
and, for each $b=1,\dots,B$, the synthetic dataset $D_T^{*(b)}$ is generated independently from the
same law on the same menu sequence $A_T$. Therefore, conditional on $A_T$, the random datasets
$D_T, D_T^{*(1)},\dots,D_T^{*(B)}$
are i.i.d. Since $M(\cdot)$ is a measurable function of the dataset, the statistics
$V_{0,T}:=M(D_T)$, $V_{b,T}:=M(D_T^{*(b)})$, $b=1,\dots,B$,
are i.i.d.\ conditional on $A_T$, hence exchangeable.

Define the rank statistic
$R_T := 1+\sum_{b=1}^B \mathbf{1}\{V_{b,T}\geq V_{0,T}\}$,
so that
$\hat p^U_{T,B}(D_T;\theta_0)=R_T/(B+1)$.
Because $(V_{0,T},V_{1,T},\dots,V_{B,T})$ is exchangeable conditional on $A_T$, the observed
statistic $V_{0,T}$ is no more likely than any of the $B+1$ coordinates to occupy one of the $m$
largest positions. With the weak inequality $\mathbf{1}\{V_{b,T}\geq V_{0,T}\}$, ties are counted
against rejection, so the resulting test is conservative. Hence, for every integer
$m \in \{0,1,\dots,B+1\}$,
\[
\Pr_{\theta_0}(R_T \leq m \mid A_T) \leq \frac{m}{B+1}.
\]
Now fix $\eta \in [0,1]$ and let $m_\eta := \lfloor \eta(B+1)\rfloor$. Then
$\{\hat p^U_{T,B}(D_T;\theta_0)\leq \eta\} = \{R_T \leq m_\eta\}$,
and therefore
\[
\Pr_{\theta_0}\!\left(\hat p^U_{T,B}(D_T;\theta_0)\leq \eta \,\middle|\, A_T\right)
=
\Pr_{\theta_0}(R_T\leq m_\eta \mid A_T)
\leq
\frac{m_\eta}{B+1}
\leq
\eta.
\]

\subsection{Proof of Proposition~\ref{prop:utility_power}}\label{proof:utility_power}

Throughout, $P_T$ denotes the true law on $D_T$ conditional on $A_T$ and $Q_{T,\theta_0}$ denotes the benchmark law of Assumption~\ref{ass:u0} used to generate synthetic draws $D_T^*$. Let $p_T^U(D_T;\theta_0) := Q_{T,\theta_0}(M(D_T^*)\geq M(D_T)\mid A_T)$.
We show that $p_T^U(D_T;\theta_0)\xrightarrow{p}0$ under $P_T$.

\medskip
\noindent
\textbf{Step 1: Lower bound for the observed MRCI under the alternative.}

By hypothesis, there exists $\bar\tau>\kappa_U(\theta_0)$ such that
\begin{equation}\label{eq:prop5-lower}
P_T\!\left(M(D_T)\geq \bar\tau\right)\to 1.
\end{equation}

\medskip
\noindent
\textbf{Step 2: Upper bound for benchmark draws.}

For each synthetic dataset $D_T^*$ drawn from $Q_{T,\theta_0}$, let
$I_{t,f,T}^* := \mathbf{1}\{f \in \mathcal F_t^{\mathrm{act}}(D_T^*)\}$
and define the realized admissibility share
$C_{f,T}^* := T^{-1}\sum_{t=1}^T I_{t,f,T}^*$.
Let $c_{f,T}(\theta_0):=\E_{Q_{T,\theta_0}}[C_{f,T}^* \mid A_T]$.
For any admissible assignment $\mathbf y\in\mathcal C(D_T^*;\mathcal F)$, the share $s_f(\mathbf y)\leq C_{f,T}^*$ for every $f$, with $s_f(\mathbf y)\geq 0$ and $\sum_f s_f(\mathbf y)=1$. Therefore
\[
M(D_T^*)
=
\max_{\mathbf y\in\mathcal C(D_T^*;\mathcal F)} \sum_{f\in\mathcal F}s_f(\mathbf y)^2
\leq
K(C_T^*),
\]
where $K(c):=\max\{\sum_f s_f^2 : s_f\geq 0,\, \sum_f s_f=1,\, s_f\leq c_f\}$.

Conditional on $A_T$, under $Q_{T,\theta_0}$, for each fixed $f$ the indicators $(I_{t,f,T}^*)_{t=1}^T$ are independent (since the only randomness in $I_{t,f,T}^*$ is $d_{t,T}^*$, and these are independent across $t$ under Assumption~\ref{ass:u0}) and bounded by $1$. By a triangular-array weak law,
$C_{f,T}^* - c_{f,T}(\theta_0) \xrightarrow{p} 0$ for each $f$.
With finitely many rules and constant limits, marginal convergence in probability implies joint convergence in probability of the vector $C_T^*$, and combined with the convergence assumption $c_{f,T}(\theta_0)\to c_f(\theta_0)$ we obtain
$C_T^* \xrightarrow{p} c(\theta_0)$.
The function $K$ is the value of a parametric linear program in $s$ with right-hand sides $c$, hence continuous on the feasibility region $\{c\ge 0:\sum_f c_f\ge 1\}$; feasibility is automatic here because $C_{\textsc{A1},T}^*+C_{\textsc{A2},T}^*=1$ identically (exactly one of $\textsc{A1},\textsc{A2}$ is active at each menu). By the continuous mapping theorem,
$K(C_T^*) \xrightarrow{p} K(c(\theta_0)) =: \kappa_U(\theta_0)$.
Therefore, for every $\varepsilon>0$,
\begin{equation}\label{eq:prop5-upper}
Q_{T,\theta_0}\!\left(M(D_T^*)\geq \kappa_U(\theta_0)+\varepsilon \,\middle|\, A_T\right)\to 0.
\end{equation}

\medskip
\noindent
\textbf{Step 3: Separation.}

Choose $\varepsilon>0$ such that $\bar\tau-\varepsilon > \kappa_U(\theta_0)+\varepsilon$.
By \eqref{eq:prop5-lower}, $P_T(M(D_T)\geq \bar\tau)\to 1$.
On that event, $M(D_T) > \kappa_U(\theta_0)+\varepsilon$, so
\[
p_T^U(D_T;\theta_0)
\leq
Q_{T,\theta_0}\!\left(M(D_T^*)\geq \kappa_U(\theta_0)+\varepsilon \mid A_T\right)
\to 0
\]
by \eqref{eq:prop5-upper}. Therefore $p_T^U(D_T;\theta_0)\xrightarrow{p} 0$ under $P_T$ and $P_T(p_T^U(D_T;\theta_0)\leq \eta)\to 1$ for every $\eta\in(0,1)$.

\medskip
\noindent
\textbf{Step 4: Monte Carlo approximation.}

Conditional on $(D_T,A_T)$, the indicators $\mathbf{1}\{M(D_T^{*(b)})\geq M(D_T)\}$, $b=1,\dots,B_T$, are i.i.d.\ Bernoulli with mean $p_T^U(D_T;\theta_0)$ under $Q_{T,\theta_0}$. By the conditional law of large numbers,
$\hat p^U_{T,B_T}(D_T;\theta_0)-p_T^U(D_T;\theta_0)\xrightarrow{p}0$.
Since $p_T^U(D_T;\theta_0)\xrightarrow{p}0$ under $P_T$, Slutsky's theorem gives
$\hat p^U_{T,B_T}(D_T;\theta_0)\xrightarrow{p}0$ under $P_T$,
hence $P_T(\hat p^U_{T,B_T}(D_T;\theta_0)\leq \eta)\to 1$ for every $\eta\in(0,1)$.

\subsection{Proof of Proposition~\ref{prop:miqp}}\label{proof:miqp}

The feasible set of $(\mathbf y,\mathbf s)$ in the MIQP of Section~\ref{appendix:miqp} is in one-to-one correspondence with the admissible assignment set $\mathcal C(D;\mathcal F)$: constraint \eqref{R1_simple} ensures exactly one rule per menu, $y_{t,f}\le a_{t,f}$ enforces the activation discipline, and \eqref{LD} defines the shares. The objective $\sum_f s_f^2$ equals the HHI. Therefore the optimal value of the MIQP equals $\MRCI(D;\mathcal F)$.


\section{Computational Strategy}\label{sec:computational}

Computing the MRCI amounts to solving a combinatorial problem: we must select, for each menu $t$, a rule $f_t\in\mathcal F_t^{\mathrm{act}}$ so as to maximize the Herfindahl--Hirschman concentration index
\[
\mathrm{HHI}(\mathbf y)
=
\sum_{f\in\mathcal F} s_f(\mathbf y)^2,
\qquad
s_f(\mathbf y):=\frac{1}{T}\sum_{t=1}^T y_{t,f},
\]
subject to the requirement that the assigned rule at each menu satisfies the activation discipline. Here $\mathbf y=(y_{t,f})_{t,f}$ is the assignment matrix with $y_{t,f}\in\{0,1\}$ and $\sum_f y_{t,f}=1$ for all $t$, and the constraint requires that if $y_{t,f}=1$, then $f\in\mathcal F_t^{\mathrm{act}}$.

An exact formulation of this problem as a mixed--integer quadratic program (MIQP) is provided in Section~\ref{appendix:miqp}. However, maximizing a convex objective function (concentration) over discrete assignments is an NP-hard problem. As demonstrated in our benchmarking exercise (see Section~\ref{appendix:miqp}), the MIQP approach scales poorly with sample size, rendering it computationally intractable for standard datasets. We therefore employ a heuristic algorithm that exploits the structure of the objective. Since we are maximizing concentration, optimal solutions tend to load on rules that can rationalize many menus simultaneously. The algorithm below is designed to efficiently explore such high--coverage, high--concentration assignments.

\subsection{Overview of the heuristic}

The procedure has three components:
\begin{enumerate}\itemsep0.25em
    \item a pre--processing step that computes, for each $(t,f)$, whether rule $f$ is locally admissible at menu $t$;
    \item a greedy assignment step that, for a given ordering of rules, assigns rules to as many unassigned admissible menus as possible;
    \item a random--restart step that repeats the greedy pass under different rule orderings, and retains the assignment with the highest $\mathrm{HHI}$.
\end{enumerate}
We now describe each component in turn.

\subsubsection*{Phase 1: Pre--processing}

\paragraph{Local admissibility.}
For every menu $t$ and rule $f\in\mathcal F$, we compute whether $f$ is strictly discriminating at $t$ in the sense of the activation discipline (Section~\ref{subsec:activation}). Formally, we set
\[
a_{t,f}
:=\mathbf 1\{f\in\mathcal F_t^{\mathrm{act}}\},
\]
where $\mathcal F_t^{\mathrm{act}}$ is the set of rules such that, once applied to $A_t$, the perceived chosen lottery $\widetilde x_t(f)$ strictly dominates its perceived alternative. In all subsequent steps we only consider assignments $(t,f)$ with $a_{t,f}=1$. This step is done once per dataset.

\subsubsection*{Phase 2: Observation-order greedy with local refinement}

The core of the heuristic is a two-step procedure that, given a random ordering of observations, builds an initial admissible assignment and then refines it by local swaps.

\paragraph{Initial assignment.}
Draw a random permutation $\pi$ of $\{1,\dots,T\}$. Maintain a counter $n_f$ of the number of observations already assigned to rule $f$, initialized at $n_f=0$. Process observations in the order $\pi$: at observation $t$, among the rules currently admissible at $t$ (i.e., $f\in\mathcal F_t^{\mathrm{act}}$), assign the one with the largest current count $n_f$. Ties are broken by a small additive uniform noise term added to each candidate's count, drawn independently across observations. Update $n_{f^\star}\leftarrow n_{f^\star}+1$ for the assigned rule $f^\star$.

At the end of the pass, every observation has been assigned to an admissible rule (the active set $\mathcal F^{\mathrm{act}}_t$ is always non-empty because of \textsc{A1}, \textsc{A2}). Let $\mathbf y^{(0)}$ denote the resulting assignment matrix.

\paragraph{Local refinement.}
Starting from $\mathbf y^{(0)}$, iterate: for each observation $t=1,\dots,T$, let $r$ be its currently assigned rule, and consider reassigning $t$ to the admissible rule $r'\in\mathcal F^{\mathrm{act}}_t\setminus\{r\}$ with the largest count $n_{r'}$. If $n_{r'}>n_r-1$ (the count that $r$ would have after removing $t$), execute the swap and update both counters. Continue until a full pass completes with no swap. The final assignment, call it $\mathbf y^{(k)}$, is the output for this restart.

The refinement loop terminates because each accepted swap strictly increases $\sum_f n_f^2 = T^2\cdot\mathrm{HHI}(\mathbf y)$, which is bounded above by $T^2$.

\subsubsection*{Phase 3: Random restarts and selection}

The Phase 2 procedure is randomized through the observation order $\pi$ and the tie-breaking noise. To explore alternative local optima, we repeat Phase 2 over $K$ independent restarts (drawing a fresh $\pi$ and fresh tie-breakers each time):
We obtain $K$ candidate assignments $\mathbf y^{(1)},\dots,\mathbf y^{(K)}$ and define
\[
\widehat{\MRCI}(D;\mathcal F)
\ :=\
\max_{1\le k\le K} \ \mathrm{HHI}(\mathbf y^{(k)}).
\]

\subsection{Accuracy and Scalability}

The heuristic replaces combinatorial optimization with a sequence of fast linear passes. We benchmarked the heuristic against the MIQP solver on a set of 3,430 subsampled optimization tasks (see Section~\ref{appendix:miqp} for full details). The mean absolute difference between the exact and heuristic MRCI is less than $0.001$, and over $96\%$ of estimates fall within a $1\%$ tolerance of the true global optimum. The heuristic scales approximately linearly in $T\times \mid \mathcal F \mid$, whereas the exact solver's runtime increases rapidly with sample size, reaching an average of 200 seconds for datasets with $140$ observations. We therefore treat $\widehat{\MRCI}(D;\mathcal F)$ as an accurate and scalable proxy for the maximal concentration consistent with the activation discipline.


\section{Mixed Integer Quadratic Programming (MIQP)}\label{appendix:miqp}

This section develops a Mixed Integer Quadratic Programming (MIQP) approach to compute the exact value of the MRCI. We use this exact formulation to benchmark the accuracy and scalability of the heuristic optimization algorithm in Section~\ref{sec:computational}.

\subsection*{MIQP Formulation}

Let $\mathcal T=\{1,\dots,T\}$ be the set of menus, $\mathcal F$ the rule library, and
$a_{t,f}:=\mathbf 1\{f\in\mathcal F_t^{\mathrm{act}}\}$ the local admissibility indicator. The MRCI is the solution to:
\[
\max_{\mathbf y,\mathbf s}\ \sum_{f\in\mathcal F} s_f^2
\]
subject to
\begin{align}
\sum_{f\in\mathcal F} y_{t,f} &= 1, \qquad \forall\, t\in\mathcal T, \tag{R1}\label{R1_simple} \\[4pt]
y_{t,f} &\le a_{t,f}, \qquad \forall\, t\in\mathcal T,\ f\in\mathcal F, \\[4pt]
s_f &= \frac{1}{T}\sum_{t\in\mathcal T} y_{t,f}, \qquad \forall\, f\in\mathcal F, \tag{LD}\label{LD} \\[4pt]
y_{t,f} &\in \{0,1\}, \qquad \forall\, t,f.
\end{align}

\begin{proposition}\label{prop:miqp}
The optimal value of the MIQP above equals $\MRCI(D;\mathcal F)$.
\end{proposition}

The proof is given in Section~\ref{ec:proofs}.

\subsection*{Benchmarking}

We compare the exact MIQP solution against the heuristic used in the main text (Section~\ref{sec:computational}).

\paragraph{Experimental setup.} We fix the CPC18 subject and draw subsamples of the full dataset by retaining $n\in\{1,2,3,4,5\}$ out of the 25 repetitions for each choice problem. For each $(i,n)$ pair, we solve both the MIQP (using a standard solver) and the heuristic. This yields 3,430 optimization tasks ($686\ \text{subjects}\times 5\ \text{repetition levels}$).

\paragraph{Scalability.}
Figure~\ref{fig:miqp_scalability} illustrates the computational necessity of the heuristic approach. As the sample size increases from 1 to 5 observations per game, the average runtime of the heuristic is $\approx 0.3$ seconds. In contrast, the runtime for the exact MIQP scales rapidly, reaching an average of 200 seconds for $n=5$. Extrapolating this to the full dataset ($n=25$) confirms that the exact method is intractable for large-scale analysis.

\begin{figure}[htbp]
    \centering
    \includegraphics[width=0.7\linewidth]{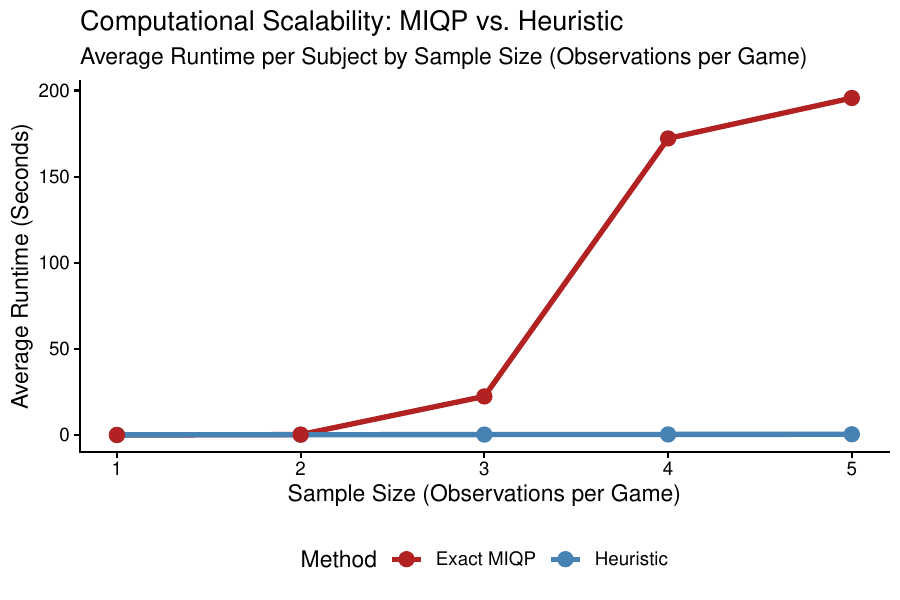}
    \caption{Scalability Benchmark. The figure shows the average computation time as a function of the number of observations per game (i.e., the number of repetitions retained from the original 25). The red line corresponds to the MIQP solver and the blue line to the heuristic.}
    \label{fig:miqp_scalability}
\end{figure}

\paragraph{Accuracy.}
We compute the absolute difference between the exact MRCI and the heuristic approximation, $\Delta = |\text{MRCI}_{\text{MIQP}} - \text{MRCI}_{\text{Heuristic}}|$. The mean absolute difference across all tasks is $9.8 \times 10^{-4}$, and $96.44\%$ of estimates fall within a strict 1\% tolerance of the true global optimum (Figure~\ref{fig:miqp_accuracy}). The maximum error observed is $0.085$. The heuristic algorithm therefore provides a high-quality approximation of the true MRCI while remaining computationally efficient for large datasets.

\begin{figure}[htbp]
    \centering
    \includegraphics[width=0.7\linewidth]{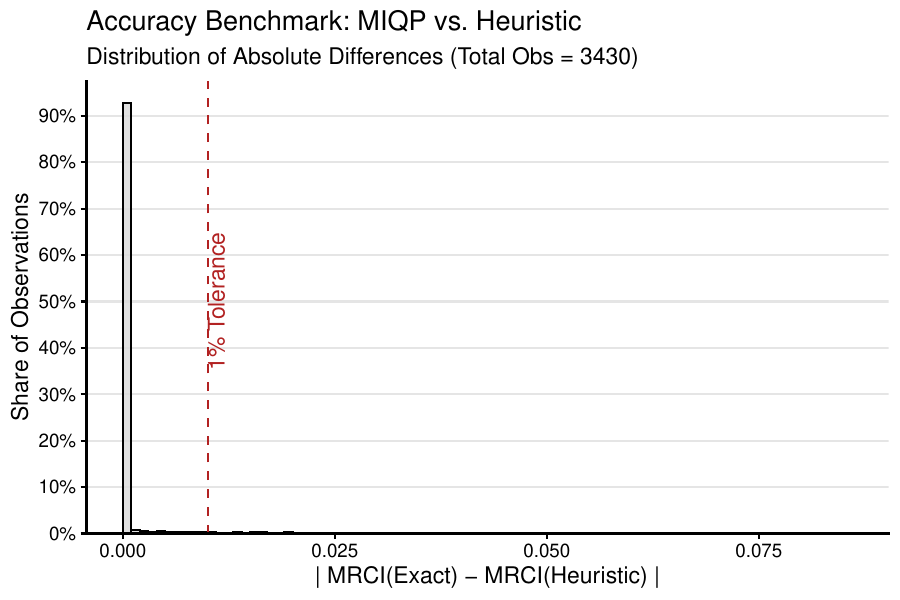}
    \caption{Accuracy Benchmark. The figure displays the distribution of the absolute difference between the exact MRCI (computed via MIQP) and the heuristic approximation. The vertical dashed line indicates a 1\% tolerance threshold.}
    \label{fig:miqp_accuracy}
\end{figure}


\end{document}